\newcommand{\F}{\mathbb{F}}
\newcommand{\samples}{\mathop{\stackrel{\ _\$}{\gets}}}
\DeclareMathOperator{\cir}{circ}
\DeclareMathOperator{\wt}{wt}
\newtheorem{thm}{Theorem}[section]
\newtheorem{lem}[thm]{Lemma}
\newtheorem{cor}[thm]{Corollary}
\newtheorem{propo}[thm]{Proposition}
\newtheorem{clm}[thm]{Claim}
\newtheorem{defn}[thm]{Definition}
\newtheorem{assm}[thm]{Assumption}
\newtheorem{rem}[thm]{Remark}
\newtheorem{obs}[thm]{Observation}
\newtheorem{egs}[thm]{Example}
\newtheorem{fct}[thm]{Fact}
\newtheorem{cons}[thm]{Construction}
\newtheorem{nte}[thm]{Note}
\newtheorem{property}[thm]{Property}
\newenvironment{theorem}{\begin{thm}}{\end{thm}}
\newenvironment{definition}{\begin{defn}}{\end{defn}}
\newenvironment{assumption}{\begin{assm}\begin{em}}{\end{em}\end{assm}}
\newenvironment{observation}{\begin{obs}\begin{em}}{\end{em}\end{obs}}
\newlength{\saveparindent}
\newlength{\saveparskip}
\def\qed{{\hspace{1pt}\rule[-1pt]{3pt}{9pt}}
\addtolength{\parskip}{-0pt}
\setlength{\parindent}{\saveparindent}
\global\advance\proofqeded by 1 }
\def\qedenv{
\addtolength{\parskip}{-0pt}
\setlength{\parindent}{\saveparindent}
\global\advance\proofqeded by 1 }
\newenvironment{proof}%
 {\proofstart}%
 {\ifnum\proofqeded=\proofended~\qed\fi \global\advance\proofended by 1
  \medskip}
 {\proofenvstart}%
 {\ifnum\proofqeded=\proofended\qedenv\fi \global\advance\proofended by 1
  \medskip}
\def\proofstart{\@ifnextchar[{\@oprf}{\@nprf}}
\def\proofenvstart{\@ifnextchar[{\@osprf}{\@nsprf}}
\def\@oprf[#1]{\protect\vspace{6pt}\noindent{\bf Proof of #1:\ }%
\addtolength{\parskip}{5pt}\setlength{\parindent}{0pt}}
\def\@osprf[#1]{\protect\vspace{6pt}\noindent
\addtolength{\parskip}{5pt}\setlength{\parindent}{0pt}}
\def\@nprf{\protect\vspace{6pt}\noindent{\bf Proof:\ }%
\addtolength{\parskip}{5pt}\setlength{\parindent}{0pt}}
\def\@nsprf{\protect\vspace{6pt}\noindent%
\addtolength{\parskip}{5pt}\setlength{\parindent}{0pt}}
\newcommand{\calD}{{\cal D}}
\newcommand{\calX}{{\cal X}}
\newcommand{\calY}{{\cal Y}}
\newcommand{\G}{{{\mathbb G}}}
\newcommand{\N}{{{\mathbb N}}}
\newcommand{\Z}{{{\mathbb Z}}}
\newcommand{\Ft}{\mathbb{F}_2} 
\def\bits{\{0,1\}}
\newcommand{\xor}{{\;\oplus\;}}
\def\getsr{\stackrel{{\scriptscriptstyle\$}}{\leftarrow}}
\newcommand{\secpar}{\kappa}
\newcommand{\alice}{\ensuremath{\mathsf{Alice}}\xspace}
\newcommand{\bob}{\ensuremath{\mathsf{Bob}}\xspace}
\newcommand{\env}{\ensuremath{\mathcal{Z}}\xspace}
\newcommand{\adv}{\ensuremath{\mathcal{A}}\xspace}
\newenvironment{boxfig}[2]{
   \begin{figure}[ht!]
     \newcommand{\FigCaption}{#1}
     \newcommand{\FigLabel}{#2}
     \begin{center}
       \begin{small}
       \setlist{nosep}
         \begin{tabular}{@{}|@{~~}l@{~~}|@{}}
           \hline
           \rule[-1.5ex]{0pt}{1ex}\begin{minipage}[b]{0.96\linewidth}
             \smallskip
             }{%
           \end{minipage}\\
           \hline
         \end{tabular}
       \end{small}
       \vspace{-0.3cm}
       \caption{\FigCaption}
       \label{\FigLabel}
     \end{center}
   \end{figure}
}
\DeclareMathOperator{\HW}{\mathsf{HW}}
\newcommand{\st}{\ensuremath{\mathsf{st}}\xspace}
\newcommand{\pk}{\ensuremath{\mathsf{pk}}\xspace}
\newcommand{\sk}{\ensuremath{\mathsf{sk}}\xspace}
\newcommand{\mes}{\ensuremath{\mathsf{m}}\xspace}
\newcommand{\ciph}{\ensuremath{\mathsf{ct}}\xspace}
\newcommand{\pksp}{\ensuremath{\mathcal{PK}}\xspace}
\newcommand{\sksp}{\ensuremath{\mathcal{SK}}\xspace}
\newcommand{\messp}{\ensuremath{\mathcal{M}}\xspace}
\newcommand{\rndsp}{\ensuremath{\mathcal{R}}\xspace}
\newcommand{\ciphsp}{\ensuremath{\mathcal{C}}\xspace}
\newcommand{\kg}{\ensuremath{\mathsf{KG}}\xspace}
\newcommand{\enc}{\ensuremath{\mathsf{Enc}}\xspace}
\newcommand{\dec}{\ensuremath{\mathsf{Dec}}\xspace}
\newcommand{\pke}{\ensuremath{\mathsf{PKE}}\xspace}
\newcommand{\rnd}{\ensuremath{\mathsf{r}}\xspace}
\newcommand{\Fot}{\ensuremath{\mathcal{F}_{\mathrm{OT}}}\xspace}
\newcommand{\snd}{\alice}
\newcommand{\rec}{\bob}
\newcommand{\suc}{\ensuremath{\mathcal{S}}}
\newcommand{\lenf}{\lambda}
\newcommand{\Fro}{\ensuremath{\mathcal{F}_{\mathrm{RO}}}\xspace}
\newcommand{\Frok}{\ensuremath{\mathcal{F}_{\mathrm{RO}1}}\xspace}
\newcommand{\Frop}{\ensuremath{\mathcal{F}_{\mathrm{RO}2}}\xspace}
\newcommand{\Fuc}{\ensuremath{\mathcal{F}\xspace}}
\newcommand{\msg}[2]{%
\ifthenelse{\isempty{#1}}{\ensuremath{(sid, #2)}}{%
\ifthenelse{\isempty{#2}}{\ensuremath{(\mathsf{#1}, sid)}}{%
\ensuremath{(\mathsf{#1}, sid, #2)}}}}
\newcommand{\pot}{\ensuremath{\pi_{OT}}\xspace}
\newcommand{\seed}{s}
\newcommand{\otmsg}{m}
\newcommand{\otp}{\otmsg^{\prime}}
\newcommand{\padm}{p}
\begin{document}

\title{A Framework for Efficient Adaptively Secure Composable Oblivious Transfer in the ROM}
\date{}
\author{
Paulo S. L. M. Barreto \thanks{University of Washington - Tacoma}~~
\and
Bernardo David\thanks{
Tokyo Institute of Technology. Emails:  \texttt{\{bdavid,mario\}@c.titech.ac.jp}.
This work was supported by the Input Output Cryptocurrency Collaborative Research Chair, which has received funding from Input Output HK.}~~
\and 
Rafael Dowsley\thanks{
Aarhus University. Email: \texttt{rafael@cs.au.dk}.
This project has received funding from the European research Council (ERC) under the European Unions's Horizon 2020 research and innovation programme (grant agreement No 669255).
} \and ~~
Kirill Morozov\thanks{Tokyo Institute of Technology}
\and
Anderson C. A. Nascimento \thanks{University of Washington - Tacoma}
}

\maketitle
\pagestyle{plain}



\begin{abstract}
Oblivious Transfer (OT) is a fundamental cryptographic protocol that finds a number of applications, in particular, as an essential building block for two-party and multi-party computation. 
We construct a round-optimal (2 rounds) universally composable (UC) protocol for oblivious transfer secure against active adaptive adversaries from any OW-CPA secure public-key encryption scheme with certain properties in the random oracle model (ROM).
In terms of computation, our protocol only requires the generation of a public/secret-key pair, two encryption operations and one decryption operation, apart from a few calls to the random oracle. In~terms of communication, our protocol only requires the transfer of one public-key, two ciphertexts, and three binary strings of roughly the same size as the message. 
Next, we show how to instantiate our construction under the low noise LPN, McEliece,  QC-MDPC, LWE, and CDH assumptions.
Our instantiations based on the low noise LPN,  McEliece, and QC-MDPC assumptions are the first UC-secure OT protocols based on coding assumptions to achieve: 1) adaptive security, 2) optimal round complexity, 3) low communication and computational complexities. Previous results in this setting only achieved static security and used costly cut-and-choose techniques.
Our instantiation based on CDH achieves adaptive security at the small cost of communicating only two more group elements as compared to the gap-DH based Simplest OT protocol of Chou and Orlandi (Latincrypt 15), which only achieves static security in the ROM.
\end{abstract}


\section{Introduction}\label{sec:intro}


Oblivious transfer (OT) \cite{TR:Rabin81,EveGolLem85} is one of the major protocols within the realm of modern cryptography. It is a fundamental building block for secure two-party and multi-party computation. In this work, we will mainly focus on 1-out-of-2 string oblivious transfer, which is a two-party protocol. Here, the sender (called Alice) inputs two strings $m_0$ and $m_1$, and the receiver (called Bob) inputs a choice bit $c$, and obtains $m_c$ as the output. Bob must not be able to learn $m_{1-c}$, while Alice must not learn $c$. Since oblivious transfer is normally used within other protocols as a primitive, it is desirable to ensure that its security is guaranteed even under concurrent composition, using the universal composability (UC) framework \cite{FOCS:Canetti01}.
Given the possible development of full-scale quantum computers, it is natural to look for protocols that implement oblivious transfer based on assumptions that are not known to be broken by quantum adversaries. 


\subsection{Our contributions}

We propose a framework for obtaining oblivious transfer, which is UC-secure against active adaptive adversaries in the random oracle model.
At the high level, our construction works as follows. We use a public-key encryption (PKE) scheme satisfying the following two properties:

\begin{itemize}
    \item \emph{Property 1 (informal)}: Let the public-key space $\pksp$ form a group with operation denoted by ``$\star$". Then, for the public keys $(\pk_0,\pk_1)$, such that $\pk_0 \star \pk_1 =q$, where $q$ is chosen uniformly at random from $\pksp$, one cannot decrypt both ciphertexts encrypted using $\pk_0$ and $\pk_1$, respectively. In particular, when the public/secret-key pair $(\pk_c,\sk_c)$, $c\in\{0,1\}$, is generated, the above relationship guarantees that $\pk_{1-c}$ that is chosen to satisfy the constraint $\pk_0 \star \pk_1 =q$
    is ``substantially random", so that learning the messages encrypted with $\pk_{1-c}$ is hard.
    
    \item \emph{Property 2 (informal):} $\pk$ obtained using the key generation algorithm is indistinguishable from a random element of $\pksp$. Note that we assume in this work that, in general, not all the elements of $\pksp$ may represent valid public-keys.
\end{itemize}

Now, in our construction, the receiver generates a key pair $(\pk_c,\sk_c)$, queries a random oracle with a random seed value $s$ to obtain $q$, computes $\pk_{1-c}$ such that $\pk_0 \star \pk_1 =q$, and sends $\pk_0$ and $s$ to the sender. The latter obtains $\pk_1$, uses the public keys to encrypt seeds that are used to generate one-time pads (using the random oracle), which in turn she uses to encrypt her respective inputs, and sends the encryptions to the receiver. Intuitively, Property 2 now prevents the sender from learning the choice bit, while Property 1 ensures that the receiver learns at most one of the inputs.

Our construction has the following advantages: 
\begin{itemize}
    \item It can be instantiated with several code-based and lattice-based assumptions, namely low noise LPN, McEliece, QC-MDPC, LWE assumptions. When instantiated with the LPN or McEliece assumptions, our protocol is several orders of magnitude more efficient than previous construction that also achieve UC-security \cite{ICITS:DavNasMul12,CANS:DavDowNas14}. 
   
    \item Our low noise LPN, McEliece and QC-MDPC based instantiations are the first adaptively secure universally composable OT protocols based on these coding assumptions.
   
    \item It can also be instantiated with the CDH assumption. Our framework provides, to the best of our knowledge, the first UC-secure construction of an oblivious transfer protocol based on the standard CDH assumption (Simplest OT~\cite{LC:ChoOrl15} can only be proven assuming gap-DH groups). 
   
    \item Our UC OT protocol based on CDH in the random oracle model with security against adaptive adversaries has basically the same efficiency as the Simplest OT \cite{LC:ChoOrl15}, which is the most efficient UC OT for the static case but can only be proven secure under gap-DH. Our protocol with CDH requires the same number of exponentiations as Simplest OT and only two extra group elements of communication, while being secure against stronger adversaries under a weaker assumption.
\end{itemize}

\paragraph{Concurrent Work} We have recently been made aware of an adaptively secure universally composable oblivious transfer protocol based on the CDH assumption in the ROM proposed by Hauck and Loss~\cite{Hauck17}. Differently from our CDH based construction, which is a corollary of a more general result, Hauck and Loss build on Diffie-Hellman key exchange to construct a protocol based on specific properties of the CDH assumption. 
The authors in  \cite{Hauck17} were interested in obtaining efficient 1-out-of-$n$ oblivious transfer protocols. We, on the other hand, were interested in the case of 1-out-of-2 oblivious transfer, which is the flavor of oblivious transfer that finds the most applications in current protocols for secure multiparty computation. Namely, 1-out-of-2 oblivious transfer serves as basis for the currently most efficient oblivious transfer extension schemes~\cite{C:NNOB12,C:KelOrsSch15}, which cheaply provide large numbers of oblivious transfers used in the currently most efficient commitment~\cite{C:CDDDN16}, two-party computation~\cite{Nielsen17} and multiparty computation~\cite{AC:FKOS15,CCS:KelOrsSch16} protocols. In this scenario, our CDH based instantiation achieves the same computational complexity of~\cite{Hauck17} (requiring 5 modular exponentiations) and requires only 2 extra group elements to be exchanged. In the case of 1-out-of-$n$ oblivious transfer, the protocol of~\cite{Hauck17} presents better computational and communication complexities when compared to our straightforward extension to 1-out-of-$n$, since it only requires a constant number of modular exponentiations and the transfer of symmetric ciphertexts with roughly the same length as the sender's messages. However, we remark our general framework also yields the currently most efficient \emph{post-quantum} constructions under lattice and coding based assumptions for the 1-out-of-2 OT case. 



\subsection{Related Works}

The idea of constructing OT using two public-keys --- the ``pre-computed" one and the ``randomized" one dates back to the CDH-based protocol of Bellare and Micali \cite{C:BelMic89}. It was proven secure in the stand-alone model, and required zero-knowledge proofs. Naor and Pinkas \cite{SODA:NaoPin01}, in particular,\footnote{They have also presented a DDH-based OT protocol in the standard model, but we require the random oracle model as a setup assumption, and hence leave this scheme out of scope of our comparison.} presented an improved and enhanced CDH-based protocol in the random oracle model under the same paradigm, however it was proven secure only in the half-simulation paradigm. It is worth noting that both of the above schemes are tailored for the Diffie-Hellman groups, and hence generalizing them is not trivial. Dowsley \textit{et al.} \cite{DowsleyGMN08} constructed oblivious transfer using the McEliece encryption and the group operation was bitwise exclusive-or of matrices (representing the public-keys). This construction required an expensive cut-and-choose technique, which was leveraged by David \textit{et al.} \cite{ICITS:DavNasMul12} to show the UC-security of this construction. Although Mathew \textit{et al.} \cite{ACISP:MVVR12} showed that the cut-and-choose techniques can be avoided in \cite{DowsleyGMN08} without changing the assumptions, Mathew \textit{et al.} only proved the stand-alone security of their proposal.

Most of the public-key cryptographic schemes that are deployed nowadays have their security based on hardness assumptions coming from number theory, such as factoring and computing discrete logarithms. Likewise, when it comes to (computationally secure) OT protocols, it is possible to build them based on the hardness of factoring~\cite{TR:Rabin81,JC:HalKal12} and on Diffie-Hellman assumptions~\cite{C:BelMic89,SODA:NaoPin01,EC:AieIshRei01,FC:ZLWR13}. In the UC-security setting, OT protocols can be designed under assumptions such as: Decisional Diffie-Hellman (DDH)~\cite{TCC:Garay04,C:PeiVaiWat08}, strong RSA~\cite{TCC:Garay04}, Quadractic Residuosity \cite{C:PeiVaiWat08}, Decisional Linear (DLIN)~\cite{EC:JarShm07,ICISC:DamNieOrl08} and Decisional Composite Residuosity (DCR)~\cite{EC:JarShm07,PKC:CKWZ13}. The Simplest OT protocol of Chou and Orlandi~\cite{LC:ChoOrl15}, which is proven UC-secure against static adversaries in the ROM based on the DDH assumption over a gap-DH group, has the same computational complexity as the CDH-based instantiation of our OT protocol (Section \ref{sec:cdh}). On the one hand, our communication complexity is a bit bigger than theirs; on the other hand, it provides security against adaptive adversaries under a weaker assumption. We would like to emphasize that although our CDH-based protocol is somewhat similar to the basic protocol by Naor and Pinkas \cite{SODA:NaoPin01}, there are the following two crucial differences: 1) In their protocol, the sender chooses randomness that is used to ``randomize" the receiver's keys, while in our protocol, that randomness comes from applying the random oracle to a seed chosen by the receiver himself; 2) Our protocol uses a different encryption method for the sender's messages, in particular, the ElGamal encryption is employed, which is not the case in their protocol. It is exactly those differences that allow us to leverage stronger security guarantees (UC-security) as compared to the Naor-Pinkas protocol (half-simulation security).

It has been known for more than two decades that Shor's algorithm~\cite{FOCS:Shor94} makes 
factoring and computing discrete logarithms easy for quantum computers. Therefore, an important pending problem concerns designing post-quantum OT protocols. 
One solution is to rely on statistically secure protocols (i.e., those not depending on any computational assumption and thus secure even if full-scale quantum computers become reality) based on assumptions such as the existence of noisy channels \cite{FOCS:CreKil88,EC:Crepeau97,EC:DamKilSal99,ISIT:SteWol02,SCN:CreMorWol04,IEEEIT:NasWin08,IEEEIT:PDMN11,AhlCsi13,IEEEIT:DowNas17}, pre-distributed correlated data \cite{STOC:Beaver97,Rivest99,IEEEIT:NasWin08}, cryptogates \cite{STOC:Kilian00,C:BeiMalMic99}, the bounded storage model~\cite{FOCS:CacCreMar98,TCC:DHRS04,ISIT:DowLacNas14,DowLacNas15} and on hardware tokens~\cite{TCC:DotKraMul11,ICITS:DowMulNil15}. Nonetheless, these constructions (except the ones based on a trusted initializer) are rather impractical. Thus, it seems reasonable to focus on obtaining OT protocols based on computational problems that are believed to be hard even for quantum computers: such as, for instance, the Learning from Parity with Noise (LPN), the Learning with Errors (LWE), and the McEliece problems. 

The LPN problem essentially states that given a system of binary linear equations, in which the outputs are disturbed by some noise, it is difficult to determine the solution. It is very simple to generate LPN samples, however finding the solution seems to be very hard \cite{BluKalWas03,Lyubashevsky05,SCN:LevFou06,EPRINT:Kirchner11}. Therefore it is an attractive assumption that has been widely used for symmetric cryptographic primitives \cite{AC:HopBlu01,C:JueWei05,ICALP:GilRobSeu08,C:ACPS09,JC:KatShiSmi10,EC:KPCJV11}. It has seen far less usage in asymmetric cryptographic primitives. However, public-key encryption schemes \cite{FOCS:Alekhnovich03,EPRINT:DamPar12,AC:DotMulNas12} were designed based on the LPN variant introduced by Alekhnovich \cite{FOCS:Alekhnovich03}, which has low noise but only provides a linear amount of samples. An OT protocol was also designed based on this variant of LPN \cite{CANS:DavDowNas14}. However, this protocol is based on cut-and-choose techniques, which require a number of key generation, encryption and decryption operations linear in the cut-and-choose statistical security parameter. On the other hand, the instantiation of our OT protocol using this LPN variant (Section \ref{sec:lpnpke}) is far more efficient.

The LWE problem is a generalization of the LPN problem that was introduced by Oded Regev \cite{STOC:Regev05}. It is as hard to solve as some lattice problems and is one of the most versatile assumptions used in cryptography (e.g., \cite{STOC:Regev05,STOC:GenPeiVai08,EC:CHKP10,FOCS:BraVai11,EC:BanPeiRos12,EC:CanChe17}). Peikert \textit{et al.} \cite{C:PeiVaiWat08} proposed a framework for realizing efficient,
round-optimal UC-secure OT protocols that can be instantiated based on LWE. Their framework works in the CRS model. The main bulding block of this framework is a dual-mode encryption, which Peikert \textit{et al.} built from a multi-bit version of Regev's IND-CPA cryptosystem. Their final instantiation requires two key generations, two encryptions and one decryption with the basic IND-CPA cryptosystem. We employ the same basic cryptosystem in our LWE instantiation but perform one less key generation.

McEliece~\cite{DSN:McEliece78} introduced a public-key encryption scheme based on hardness of the syndrome  decoding problem. The cryptosystem proposed by Niederreiter~\cite{PCIT:Niederreiter86} is its dual. Later on, IND-CPA-secure~\cite{WCC:NIKM07,DCC:NIKM08} and IND-CCA2-secure~\cite{RSA:DowMulNas09,PKC:FGKRS10,IEEEIT:DDMN12} variants of these schemes were introduced. Both stand-alone~\cite{ICITS:DGMN08,IEICE:DGMN12,ACISP:MVVR12}
as well as fully-simulatable~\cite{DN2011,DNS2012} and UC-secure~\cite{ICITS:DavNasMul12} OT protocols can be built using these cryptosystems. As in the case of low noise LPN, most of these protocols (and all UC-secure ones) are based on cut-and-choose techniques, which require a number of key generation, encryption and decryption operations linear in the cut-and-choose statistical security parameter. At the same time, our construction uses a small constant number of these operations. All the above advanced constructions assume pseudorandomness of the public-keys of the McEliece and Niederreiter PKE's. Our OT protocol based on the McEliece encryption scheme (Section \ref{sec:mepke}) is far more efficient than the ones using cut-and-choose techniques.

A number of adaptively secure universally composable oblivious transfer protocols have been proposed in current literature~\cite{STOC:CLOS02,ACNS:BlaChe15,ACNS:BlaCheGer17,PKC:CKWZ13,C:GarWicZho09,AC:BlaChe16,TCC:CDMW09}. Most of these protocols are on the CRS model and on erasures in order to achieve adaptive security. Moreover, these protocols require more than 2 rounds and rely either on complex zero-knowledge proof techniques or on adaptively secure universally composable commitments as central building blocks of their constructions. On the other hand, our generic construction does not assume secure erasures and requires solely a simple OW-CPA secure cryptosystem and can be executed in 2 rounds, which is optimal. Moreover, by avoiding heavy primitives such as adaptively secure UC commitments and zero-knowledge proofs, we achieve much better concrete computational and communication complexities. Besides this stark contrast in efficiency, our results show that adaptively secure UC OT can be achieved under much weaker assumptions in the random oracle model in comparison to the CRS model.

It is a well-known fact that UC-secure OT protocols require some setup assumption \cite{C:CanFis01}. Our protocols use the random oracle model as such. Alternative setup assumptions that can be used to obtain UC-secure OT protocols include the common reference string (CRS) model \cite{STOC:CLOS02,TCC:Garay04,C:PeiVaiWat08}, the public-key infrastructure model \cite{C:DamNie03}, the existence of noisy channels \cite{SBSEG:DMN08,JIT:DGMN13}, and tamper-proof hardware~\cite{EC:Katz07,TCC:DotKraMul11,ICITS:DowMulNil15}.

\section{Preliminaries}\label{sec:pre}

We denote by $\secpar$ the security parameter. Let $y \getsr F(x)$ denote running the randomized algorithm $F$ with input $x$ and random coins, and obtaining the output $y$. Similarly, $y \gets F(x)$ is used for a deterministic algorithm. For a set $\calX$, let $x \getsr \calX$ denote $x$ chosen uniformly at random from $\calX$; and for a distribution $\calY$, let $y \getsr \calY$ denote $y$ sampled according to the distribution $\calY$. Let $\Ft$ denote the finite field with 2 elements. For $A, B \in \Ft^{m \times n}$, $A \xor B$ denotes their element-wise exclusive-or. For a parameter $\rho$, 
$\chi_\rho$ denotes the Bernoulli distribution that outputs 1 with probability $\rho$. Let $\HW(e)$ denote the Hamming weight of a vector $e$, i.e., the number of its non-zero positions. 
We will denote by $\mathsf{negl}(\secpar)$ the set of negligible functions of $\secpar$. We abbreviate \emph{probabilistic polynomial time} as PPT.

\subsection{Encryption Schemes}
\label{sec:enc}

The main building block used in our OT protocol is a public-key encryption scheme. Such a scheme, denoted as $\pke$, has public-key $\pksp$, secret-key $\sksp$, message $\messp$, randomness $\rndsp$ and ciphertext $\ciphsp$ spaces that are functions of the security parameter $\secpar$, and 
consists of the following three algorithms $\kg$, $\enc$, $\dec$: 

\begin{itemize}
\item The PPT key generation algorithm $\kg$ takes as input the security parameter 
$1^\secpar$ and outputs a pair of public $\pk \in \pksp$ and secret $\sk \in \sksp$ keys.

\item The PPT encryption algorithm $\enc$ takes as input a public-key $\pk \in \pksp$, a message $\mes \in \messp$ and randomness $\rnd \in \rndsp$ and outputs a ciphertext $\ciph \in \ciphsp$. We denote this operation by $\enc(\pk,\mes,\rnd)$. When $\rnd$ is not explcitly given as input, it is assumed to be sampled uniformly at random from $\rndsp$.

\item The (deterministic) decryption algorithm $\dec$ takes as input the secret-key $\sk \in \sksp$ and a ciphertext $\ciph \in \ciphsp$ and outputs either a message $\mes \in \messp$ or an error symbol $\perp$. For $(\pk,\sk)\getsr \kg(1^\secpar)$, any $\mes \in \messp$, and $c \getsr \enc(\pk,\mes)$, it should hold that $\dec(\sk,\ciph)=\mes$ with overwhelming probability over the randomness used by the algorithms.
\end{itemize}

We should emphasize that for some encryption schemes not all $\widetilde{\pk} \in \pksp$ are ``valid'' in the sense of being a possible output of $\kg$.
The same holds for $\widetilde{\ciph} \in \ciphsp$ in relation to $\enc$ and all possible coins and messages. 

Next we define the notion of one-wayness against chosen-plaintext attacks (OW-CPA).

\begin{definition}[OW-CPA security] 
A $\pke$ is OW-CPA secure if for every PPT adversary $\adv$, and for $(\pk,\sk) \getsr \kg(1^\secpar)$, $\mes \getsr \messp$ and $\ciph \getsr \enc(\pk,\mes)$, 
it holds that 
\[
\Pr[\adv(\pk,\ciph) = \mes ] \in \mathsf{negl}(\secpar).
\]
\end{definition}

Our OT constructions use as a building block a $\pke$ that satisfies a variant of the OW-CPA security notion: informally, two random messages are encrypted under two different public-keys, one of which can be chosen by the adversary (but he does not have total control over both public-keys). His goal is then to recover both messages and this should be difficult. Formally, this property is captured by the following definition.

\begin{property}\label{ass:dowa}
Consider the public-key encryption scheme $\pke$ and the security parameter $\secpar$. It is assumed that $\pksp$ forms a group with operation denoted by ``$\star$". For every PPT  two-stage adversary $\adv=(\adv_1,\adv_2)$ running the following experiment: 
\begin{tabbing}
1234567\=123\=123\=123\=\kill
\> $q \getsr \pksp$\\
\> $(\pk_1,\pk_2,\st) \getsr \adv_1(q) \text{ such that } 
\pk_1, \pk_2 \in \pksp \text{ and }
\pk_1 \star \pk_2 =q$\\
\> $\mes_i  \getsr \messp \text{ for } i=1,2$\\
\> $\ciph_i \getsr \enc(\pk_i,\mes_i) \text{ for } i=1,2$\\
\> $(\widetilde{\mes_1},\widetilde{\mes_2}) \getsr \adv_1(\ciph_1,\ciph_2,\st)$
\end{tabbing}
it holds that
\[
\Pr[(\widetilde{\mes_1},\widetilde{\mes_2}) = (\mes_1,\mes_2) ] \in \mathsf{negl}(\secpar).
\]
\end{property}

We also need a property about indistinguishability of a public-key generated using $\kg$ and an element sampled uniformly at random from $\pksp$.

\begin{property}\label{ass:keys}
Consider the public-key encryption scheme $\pke$ and the security parameter $\secpar$. Let $(\pk,\sk) \getsr \kg(1^\secpar)$ and $\pk' \getsr \pksp$. For every PPT distinguisher $\adv$, it holds that
\[
| \Pr[\adv(\pk) = 1 ] - \Pr[\adv(\pk') = 1 ] | \in \mathsf{negl}(\secpar).
\]
\end{property}

In Section \ref{sec:inst}, we describe some cryptosystems for which Properties \ref{ass:dowa} and \ref{ass:keys} are believed to hold and thus they can be used to instantiate our OT protocol.

\subsection{Universal Composability}
We prove our protocols secure in the Universal Composability (UC) framework introduced by Canetti in~\cite{FOCS:Canetti01}. 
In this section, we present a brief description of the UC framework originally given in~\cite{PKC:CDDGNT15} and refer interested readers to~\cite{FOCS:Canetti01} for further details.
In this framework, protocol security is analyzed under the real-world/ideal-world paradigm, \textit{i.e.}, by comparing the real world execution of a protocol with an ideal world interaction with the primitive that it implements. The model includes a \textit{composition theorem}, that basically states that UC secure protocols can be arbitrarily composed with each other without any security compromises. This desirable property not only allows UC secure protocols to effectively serve as building blocks for complex applications but also guarantees security in practical environments, where several protocols (or individual instances of protocols) are executed in parallel, such as the Internet. 

In the UC framework, the entities involved in both the real and ideal world executions are modeled as PPT  Interactive Turing Machines (ITM) that receive and deliver messages through their input and output tapes, respectively. In the ideal world execution, dummy 
parties (possibly controlled by an ideal adversary $\suc$ referred to as the \textit{simulator}) interact directly with 
the ideal functionality $\Fuc$, which works as a trusted third party that computes the desired 
primitive. In the real world execution, several parties (possibly corrupted 
by a real world adversary $\adv$) interact with each other by means of a protocol $\pi$ that 
realizes the ideal functionality. The real and ideal executions are controlled by the \textit{environment} $\env$, 
an entity that delivers inputs and reads the outputs 
of the individual parties, the adversary $\adv$ and the simulator $\suc$. After a real or ideal execution, $\env$ outputs a bit, which is considered as the output of the execution. The rationale behind this framework lies in showing that the 
environment  $\env$ (that represents everything that happens outside of the protocol
execution) is not able to efficiently distinguish between the real and ideal executions, thus implying that the real world protocol is as secure as 
the ideal functionality.

We denote by $\mathsf{REAL}_{\pi, \adv, \env}(\secpar,z,\bar{r})$ the output of the environment $\env$ in the real-world execution of a protocol $\pi$ between $n$ parties with an adversary $\adv$ under security parameter $\secpar$, input $z$ and randomness $\bar{r}=(r_{\env},r_{\adv},r_{P_1},\ldots,r_{P_n})$, where $(z,r_{\env})$, $r_{\adv}$ and $r_{P_i}$ are respectively related to $\env$, $\adv$ and party $i$. Analogously, we denote by $\mathsf{IDEAL}_{\Fuc, \suc, \env}(\secpar,z,\bar{r})$ the output of the environment in the ideal interaction between the simulator $\suc$ and the ideal functionality $\Fuc$ under security parameter $\secpar$, input $z$ and randomness $\bar{r}=(r_{\env},r_{\suc},r_{\Fuc})$, where $(z,r_{\env})$, $r_{\suc}$ and $r_{\Fuc}$ are respectively related to $\env$, $\suc$ and $\Fuc$.
The real world execution and the ideal executions are respectively represented by the ensembles 
$\mathsf{REAL}_{\pi, \adv, \env}=\{\mathsf{REAL}_{\pi, \adv, \env}(\secpar,z,\bar{r})\}_{\secpar \in \N}$ and 
$\mathsf{IDEAL}_{\Fuc, \suc, \env}=\{\mathsf{IDEAL}_{\Fuc, \suc, \env}(\secpar,z,\bar{r})\}_{\secpar \in \N}$ with $z \in \bits^{*}$ and  a uniformly chosen $\bar{r}$.

In addition to these two models of computation, the UC framework also considers the $\mathcal{G}$-hybrid world,  where the computation proceeds as in the real-world with the additional assumption  that the parties have access to an auxiliary ideal functionality $\mathcal{G}$. In this model, honest parties do not communicate with the ideal functionality directly, but instead the adversary delivers all the messages to and from the ideal functionality. We consider the communication channels to be ideally authenticated, so that the adversary may read but not modify these messages. Unlike messages exchanged between parties, which can be read by the adversary, the messages exchanged between parties and the ideal functionality are divided into a \textit{public header} and a \textit{private header}. The public header can be read by the adversary and contains non-sensitive information (such as session identifiers, type of message, sender and receiver). On the other hand, the private header cannot be read by the 
adversary and contains information such as the parties' private inputs. We denote the ensemble of environment outputs that represents 
an execution of a protocol $\pi$ in a $\mathcal{G}$-hybrid model as $\mathsf{HYBRID}^{\mathcal{G}}_{\pi, \adv, \env}$ (defined analogously to $\mathsf{REAL}_{\pi, \adv, \env}$).  UC security is then formally defined as:  

\begin{definition}
An n-party ($n \in \N$) protocol $\pi$ is said to UC-realize an ideal functionality $\mathcal{F}$ in the $\mathcal{G}$-hybrid model if, for every adversary $\mathcal{A}$, there exists a  simulator $\mathcal{S}$ such that, for every environment $\mathcal{Z}$, the following relation holds: $$\mathsf{IDEAL}_{\Fuc, \suc, \env}\approx \mathsf{HYBRID}^{\mathcal{G}}_{\pi, \adv, \env}$$
\end{definition}
We say that a protocol is \emph{statistically secure}, if the same holds for all $\env$ with 
unbounded computing power.

\subsubsection{Adversarial Model:} We consider a malicious adversary, which can deviate from the prescribed protocol in an arbitrary way. We call the adversary \emph{static}, if he has to corrupt parties before execution starts and the corrupted (or honest) parties remain as such throughout the execution. We call the adversary \emph{adaptive}, if he is able corrupt parties at any point in the protocol execution, and even after it.

\subsubsection{Oblivious Transfer Ideal Functionality:} The basic 
1-out-of-2 string oblivious transfer functionality $\Fot$ as defined 
in~\cite{STOC:CLOS02} is presented in Figure \ref{fig:fot}.

\begin{boxfig}{Functionality \Fot.}{fig:fot}
\begin{center}
\textbf{Functionality } \Fot.
\end{center}

\Fot interacts with a sender $\snd$ and a receiver $\rec$. The lengths of the strings $\lenf$ is fixed and known to both parties. 
\Fot proceeds as follows:
\begin{itemize}
\item Upon receiving a message $\msg{sender}{\vec{x}_0, \vec{x}_1 }$ from $\snd$, where each $\vec{x}_i \in \{0, 1\}^{\lenf}$, store the tuple $(sid, \vec{x}_0 , \vec{x}_1)$. Ignore further messages from $\snd$ with the same $sid$. 

\item Upon receiving a message $\msg{receiver}{c}$ from $\rec$, where $c\in \{0,1\}$, check if a tuple $(sid, \vec{x}_0, \vec{x}_1)$ was recorded. If yes, send $\msg{received}{\vec{x}_{c}}$ to $\rec$ and $\msg{received}{}$ to $\snd$ and halt. Otherwise, send nothing to $\rec$, but continue running.
\end{itemize}
\end{boxfig}

\subsubsection{Setup Assumptions:} Our constructions rely on the random oracle model~\cite{CCS:BelRog93}, which can be modelled in the UC framework as the $\Fro$-hybrid model. The random oracle functionality $\Fro$ is presented in Figure~\ref{fig:fro}. Our construction will actually use two instances of $\Fro$: $\Frok$ with range $\pksp$ and $\Frop$ with range $\{0,1\}^{\lenf}$.


\begin{boxfig}{Functionality \Fro.}{fig:fro}
\begin{center}
\textbf{Functionality } \Fro
\end{center}

\Fro is parameterized by a range $\calD$. $\Fro$ keeps a list $L$ of pairs of values, which is initially empty, and proceeds as follows:

\begin{itemize}

\item Upon receiving a value $\msg{}{m}$ from a party $P_i$ or from $\suc$, if there is a pair $(m,\hat{h})$ in the list $L$, set $h = \hat{h}$. Otherwise, choose $h \getsr \calD$ and store the pair $(m,h)$ in $L$.
Reply to the activating machine  with $\msg{}{h}$.
\end{itemize}

\end{boxfig}

\section{Oblivious Transfer Protocol}\label{sec:protocol}

In this section, we introduce our 1-out-of-2 OT protocol and prove its UC security against static malicious adversaries in the \Fro-hybrid model (\textit{i.e.}, the random oracle model). Our protocol uses as a building block a public-key encryption scheme that satisfies Properties \ref{ass:dowa} and \ref{ass:keys} (defined in Section \ref{sec:enc}). The high-level idea is that $\rec$ picks two public-keys $\pk_0,\pk_1$ such that he only knows the secret key corresponding to $\pk_c$ (where $c$ is his choice bit) and hands them to $\snd$. She then uses the two public-keys to transmit the messages in an encrypted way, so that $\rec$ can only recover the message, for which he knows the secret-key $\sk_c$. 

A crucial point in such schemes is making sure that $\rec$ is only able to decrypt one of the messages. In order to enforce this property, our protocol relies on Property \ref{ass:dowa} and uses the random oracle to force the element $q$ to be chosen uniformly at random from $\pksp$. After generating the pair of secret and public key $(\sk_c,\pk_c)$,
$\rec$ samples a seed $\seed$, queries the random oracle $\Frok$ to obtain $q$ and computes $\pk_{1-c}$ such that $\pk_0 \star \pk_{1} =q$. $\rec$ then hands the public-key $\pk_0$ and the seed $\seed$ to $\snd$, enabling her to also compute $\pk_1$.  Since the public-keys are indistinguishable according to Property \ref{ass:keys}, $\snd$ learns nothing about $\rec$'s choice bit.
Next, $\snd$ picks two uniformly random strings $\padm_0,\padm_1$, queries them to the random oracle $\Frop$ obtaining $\padm_0^{\prime},\padm_1^{\prime}$ as response,
and computes one-time pad encryptions of her messages $\otmsg_0,\otmsg_1$ as $\otp_0 = \otmsg_0 \oplus \padm_0^{\prime}$ and $\otp_1 = \otmsg_1 \oplus\padm_1^{\prime}$. $\snd$ also encrypts $\padm_0$ and $\padm_1$ under $\pk_0$ and $\pk_1$, respectively, to obtain $\ciph_0$ and $\ciph_1$. $\snd$ sends $(\otp_1,\otp_2,\ciph_0,\ciph_1)$ to $\rec$. $\rec$ can use $\sk_c$ to decrypt $\ciph_c$ obtaining $\padm_c$. He then queries $\padm_c$ to the random oracle $\Frop$ obtaining $\padm_c^{\prime}$ as response,
and retrieves $\otmsg_c=\otp_c \oplus \padm_c^{\prime}$. Due to Property \ref{ass:dowa}, $\rec$ will not be able to recover $\padm_{1-c}$ in order to query it to the random oracle and to decrypt $\otp_{1-c}$. Therefore, the security for $\snd$ is also guaranteed.

Our protocol $\pot$ is described in Figure~\ref{fig:pot}. It can be instantiated using different public-key encryption schemes as described in Section \ref{sec:inst}.

\begin{boxfig}{Protocol \pot}{fig:pot}
\begin{center}
\textbf{Protocol } \pot
\end{center}

Let $\pke$ be a public-key encryption scheme that satisfies Properties \ref{ass:dowa} and \ref{ass:keys}, and $\secpar$ be the security parameter. Protocol \pot is executed between $\snd$ with inputs $\otmsg_0,\otmsg_1 \in \{0,1\}^{\lenf}$ and $\rec$ with input $c \in \{0,1\}$. They interact with each other and with two instances of the random oracle ideal functionality $\Fro$ ($\Frok$ with range $\pksp$ and $\Frop$ with range $\{0,1\}^{\lenf}$), proceeding as follows:

\begin{enumerate}
\item $\rec$ generates a pair of keys $(\pk_c,\sk_c) \getsr \kg(1^{\secpar})$. He samples a random string $\seed \getsr \{0,1\}^{\secpar}$ and sends $\msg{}{\seed}$ to $\Frok$, obtaining $\msg{}{q}$ as answer. $\rec$ computes 
$\pk_{1-c}$ such that $\pk_0 \star \pk_{1} =q$ and sends $(\seed,\pk_0)$ to $\snd$.

\item Upon receiving $(\seed,\pk_0)$, $\snd$:
    \begin{enumerate}
        \item Queries $\Frok$ with $\msg{}{\seed}$, obtaining $\msg{}{q}$ in response. 
        
        \item Computes $\pk_1$ such that $\pk_0 \star \pk_{1} =q$.
            
        \item Samples $\padm_0, \padm_1  \getsr \{0,1\}^{\secpar}$ and queries $\Frop$ with $\msg{}{\padm_0}$ and $\msg{}{\padm_1}$, obtaining $\msg{}{\padm_0^{\prime}}$ and $\msg{}{\padm_1^{\prime}}$ as answers.  
        
        \item  Computes $\otp_0=\padm_0^{\prime} \oplus \otmsg_0$, $\otp_1=\padm_1^{\prime} \oplus \otmsg_1$, $\ciph_0 \getsr \enc(\pk_0,\padm_0)$ and $\ciph_1 \getsr \enc(\pk_1,\padm_1)$. 
        
        \item Sends $(\otp_0,\otp_1,\ciph_0,\ciph_1)$ to $\rec$ and halts.
    \end{enumerate}

\item $\rec$ computes $\padm_c \gets \dec(\sk_c,\ciph_c)$. If decryption fails, $\rec$ outputs $\otmsg_c \getsr \{0,1\}^{\lenf}$ and halts. If decryption succeeds $\rec$ queries $\Frop$ with $\msg{}{\padm_c}$, obtaining $\msg{}{\padm_c^{\prime}}$ as answer. He then outputs $\otmsg_c = \otp_c \oplus \padm_c^{\prime}$. 

\end{enumerate}

\end{boxfig}

\subsection{Security Analysis}

We now formally state the security of $\pot$.

\begin{theorem}\label{th:potucme}
Let $\pke$ be a public-key encryption scheme that satisfies Properties \ref{ass:dowa} and \ref{ass:keys}. When instantiated with $\pke$, the protocol $\pot$ UC-realizes the functionality $\Fot$ against static adversaries in the $\Fro$-hybrid model.
\end{theorem}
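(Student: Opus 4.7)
The plan is to establish UC-security by exhibiting, for each of the four static corruption patterns, a simulator $\suc$ whose ideal interaction with $\Fot$ produces a view indistinguishable from the real $\Fro$-hybrid execution. The two extreme cases are handled directly: when both parties are honest, $\suc$ runs $\pot$ internally on dummy inputs, and the resulting transcript is indistinguishable from a real one because $\pk_0$ is pseudorandom by Property~\ref{ass:keys} and the $\otp_i$ values are one-time pads under unqueried (hence uniform) random-oracle images; when both parties are corrupt, $\suc$ runs both locally and forwards the outputs to $\Fot$. The substantive work is extraction in the two single-corruption cases, which is feasible because $\suc$ owns the simulated $\Frok,\Frop$ tables and may program them on the fly.

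For a corrupt sender, $\suc$ plays the honest $\rec$ but with a trapdoor: it generates $(\pk_0,\sk_0),(\pk_1,\sk_1)\getsr\kg(1^\secpar)$, picks a fresh $\seed\getsr\{0,1\}^\secpar$, programs $\Frok(\seed)\gets \pk_0\star\pk_1$, and sends $(\seed,\pk_0)$ to the adversary. On receiving $(\otp_0,\otp_1,\ciph_0,\ciph_1)$, it decrypts each $\ciph_i$ with $\sk_i$, queries $\Frop$ to obtain $\padm_i^{\prime}$, sets $\otmsg_i\gets\otp_i\oplus\padm_i^{\prime}$, and hands $(\otmsg_0,\otmsg_1)$ to $\Fot$. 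In the real execution $\pk_c$ comes from $\kg$ while $\pk_{1-c}=\pk_c^{-1}\star q$ is induced by a uniform $q$; a direct hybrid invoking Property~\ref{ass:keys} replaces the simulator's second $\kg$-generated key by a uniform element of $\pksp$, at which point the induced $\Frok(\seed)$ is uniform exactly as in the real world, making the two views indistinguishable.

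For a corrupt receiver, $\suc$ plays the honest $\snd$. On receiving $(\seed,\pk_0)$, $\suc$ reads (or lazily samples) $q\gets\Frok(\seed)$, computes $\pk_1=\pk_0^{-1}\star q$, samples $\padm_0,\padm_1\getsr\{0,1\}^\secpar$ and $\otp_0,\otp_1\getsr\{0,1\}^\lenf$ uniformly, forms $\ciph_i\getsr\enc(\pk_i,\padm_i)$, and sends $(\otp_0,\otp_1,\ciph_0,\ciph_1)$. Thereafter $\suc$ watches the adversary's queries to $\Frop$: when a query matches some $\padm_i$, $\suc$ sets $c\gets i$, asks $\Fot$ for $\otmsg_c$, programs $\Frop(\padm_c)\gets\otp_c\oplus\otmsg_c$, and returns that value; all other queries are answered freshly and uniformly.

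The crux of the argument, and its main obstacle, is showing that the corrupt receiver queries at most one of $\padm_0,\padm_1$ except with negligible probability; only under this event is $\otp_{1-c}$ a one-time pad with a uniform unrevealed key in both worlds, so the simulation becomes statistically close to the real execution and the programming on $\padm_c$ is never exposed as inconsistent. This reduces cleanly to Property~\ref{ass:dowa}: given a challenge $(q,\ciph_1^{\prime},\ciph_2^{\prime})$, the reduction emulates $\pot$, programs $\Frok(\seed)\gets q$ on the seed the adversary eventually sends, substitutes $\ciph_0,\ciph_1$ by $\ciph_1^{\prime},\ciph_2^{\prime}$, and, should the adversary ever query both $\padm_0$ and $\padm_1$, reads them off the $\Frop$ log and wins. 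A minor side condition, that $\Frok(\seed)$ is unassigned at the moment of programming, fails only if the adversary preemptively queried $\seed$, occurring with probability $O(Q\cdot 2^{-\secpar})$ for $Q$ random-oracle queries. Combining all hybrids yields the stated UC-indistinguishability.
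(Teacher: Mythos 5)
Your proposal is correct and follows essentially the same route as the paper: the same four-way corruption split, the same simulator for a corrupt sender (program $\Frok(\seed)\gets\pk_0\star\pk_1$ with both trapdoors known, argue via Property~\ref{ass:keys}), and the same simulator for a corrupt receiver (random $\otp_i$, extract $c$ from the $\Frop$ query, abort on a double query and reduce to Property~\ref{ass:dowa}). The paper likewise treats the $\seed$-collision and pre-query events as negligible random-oracle failures and handles the honest/honest case by a locally run transcript, so there is no substantive divergence.
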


\begin{proof}
In order to prove the security of $\pot$, we must construct a simulator $\suc$ such that no environment $\env$ can distinguish between interactions with an adversary $\adv$ in the real world and with $\suc$ in the ideal world. For the sake of clarity, we will describe the simulator $\suc$ separately for the case where only $\rec$ is corrupted and the case where only $\snd$ is corrupted. In both cases, $\suc$ writes all the messages received from $\env$ in $\adv$'s input tape, simulating $\adv$'s environment. Also, $\suc$ writes all messages from $\adv$'s output tape to its own output tape, forwarding them to $\env$.
Notice that simulating the cases where both Alice and $\rec$ are honest or corrupted is trivial. If both are corrupted, $\suc$ simply runs $\adv$ internally. In this case, $\adv$ will generate the messages from both corrupted parties. If neither $\snd$ nor $\rec$ are corrupted, $\suc$ runs the protocol between honest $\snd$ and $\rec$ internally on the inputs provided by $\env$ and all messages are delivered to $\adv$. 
As for the correctness, notice that $\padm_c$ is encrypted under public key $\pk_c$, for which $\rec$ knows the corresponding secret key $\sk_c$. Hence, $\rec$ can successfully recover $\padm_c$ from $\ciph_c$ and use it to obtain $\padm_c^{\prime}$ from $\Fro$. This enables $\rec$ to retrieve $\otmsg_c$ from $\otp_c$ by computing $\otmsg_c = \otp_c \oplus \padm_c^{\prime}$.

\paragraph{Simulator for a corrupted Bob:} In the case where only $\rec$ is corrupted, the simulator $\suc$ interacts with $\Fot$ and an internal copy $\adv$ of the real world adversary ($\suc$ acts as $\snd$ in this internal simulated execution of the protocol). Additionally, $\suc$ also plays the role of $\Frok$ and $\Frop$ to $\adv$.
The goal of the simulator is to extract $\rec$'s choice bit in order to request the correct message from $\Fot$. In order to do so, $\suc$ first executes the same steps of an honest $\snd$ in $\pot$ with the difference that it picks uniformly random values for $\otp_0,\otp_1$ instead of computing them. Since $\suc$ plays the role of $\Frop$ to $\adv$, it learns $\rec$'s choice bit when it receives a query $\msg{}{\padm_c}$ from $\adv$ (originally meant for $\Frop$ in the real world). Since $\adv$ can only query the random oracle $\Frok$ in a polynomial number of points to obtain $q$, an $\adv$ that queries $\Frop$ with $\padm_{1-c}$ (thus tricking $\suc$ into extracting the wrong choice bit) can be trivially used to break Property \ref{ass:dowa}. Knowing $c$, $\suc$ obtains $\otmsg_c$ from $\Fot$ and answers $\adv$'s query to $\Frop$ with $\msg{}{\padm_{c}^{\prime}}$ 
such that $\padm_c^{\prime} = \otp_c \oplus \otmsg_c$. The simulator $\suc$ for the case where only $\rec$ is corrupted is presented in Figure~\ref{fig:sucrec}.

\begin{boxfig}{Simulator $\suc$ for the case where only $\rec$ is corrupted.}{fig:sucrec}
\begin{center}
\textbf{Simulator } $\suc$ (Corrupted $\rec$)
\end{center}

Let $\lenf$ be the length of the messages and $\secpar$ be the security parameter.
The simulator $\suc$ interacts with an environment $\env$, functionality $\Fot$ and an internal copy $\adv$ of the adversary that corrupts only $\rec$, proceeding as follows:

\begin{enumerate}
\item $\suc$ simulates the answers to the random oracle queries from $\adv$ exactly as $\Frok$ and $\Frop$ would (and stores the lists of queries/answers), except when stated otherwise in $\suc$'s description.

\item Upon receiving $(\seed,\pk_0)$ from $\adv$, $\suc$ proceeds as follows:

    \begin{enumerate}
        \item Checks if there is a pair $(\seed,q)$ in the simulated list of $\Frok$. If not, $\suc$ samples $q \getsr \pksp$ and stores $(\seed,q)$ in it. $\suc$ computes $\pk_1$ such that $\pk_0 \star \pk_1 =q$.
        
        \item Samples $\otp_0,\otp_1 \getsr \{0,1\}^{\lenf}$, $\padm_0, \padm_1 \getsr \{0,1\}^{\secpar}$. If $\padm_0$ or $\padm_1$ has already been queried to $\Frop$, $\suc$ aborts. Otherwise, $\suc$ computes $\ciph_0 \getsr \enc(\pk_0,\padm_0)$ and $\ciph_1 \getsr \enc(\pk_1,\padm_1)$, and sends $(\otp_0,\otp_1,\ciph_0,\ciph_1)$ to $\adv$.
    \end{enumerate}
    
\item $\suc$ resumes answering the random oracle queries from $\adv$ as above, except for the queries $\padm_0$ and $\padm_1$ to $\Frop$. Upon receiving such query $\padm_c$, $\suc$ queries $\Fot$ with $\msg{receiver}{c}$, receiving $\msg{received}{\otmsg_{c}}$ in response. $\suc$ answers $\adv$'s query with the value $\otp_c \oplus \otmsg_{c}$. After this point, if $\adv$ queries $\padm_{1-c}$ to $\Frop$, $\suc$ aborts.
When $\adv$ halts, $\suc$ also halts and outputs whatever $\adv$ outputs.

\end{enumerate}

\end{boxfig}

Notice that unless the simulator $\suc$ aborts, it perfectly emulates the execution of the real protocol for $\adv$. Hence, an environment can distinguish the real world execution from the ideal world simulation only if: (1) $\padm_0$ or $\padm_1$ is queried to $\Frop$ before the simulator sends the encrypted messages; or (2) $\adv$ queries both $\padm_0$ or $\padm_1$ to $\Frop$. The first event can only happen with negligible probability as $\padm_0$ and $\padm_1$ are uniformly random strings of length $\secpar$ and the PPT adversary $\adv$ can only make a polynomial number of queries to the random oracle. The probability of the second event is polynomially related with the probability of breaking Property \ref{ass:dowa}.
Thus, a PPT environment $\env$ cannot distinguish an interaction with $\suc$ in the ideal world from an interaction with $\adv$ in the real world except with negligible probability.

\paragraph{Simulator for a corrupted Alice:} 

In the case where only $\snd$ is corrupted, the simulator $\suc$ interacts with $\Fot$ and an internal copy $\adv$ of the real world adversary
($\suc$ acts as $\rec$ in this internal simulated execution of the protocol). Additionally, $\suc$ also plays the role of $\Frok$ and $\Frop$ to $\adv$. The goal of the simulator is to extract both messages $\otmsg_0,\otmsg_1$ of the receiver in order to deliver them to $\Fot$. In order to do so, $\suc$ has to trick $\adv$ into accepting two public-keys $\pk_0,\pk_1$ for which $\suc$ knows the corresponding secret-keys $\sk_0,\sk_1$. $\suc$ generates two secret and public-key pairs $(\pk_0,\sk_0) \getsr \kg(1^{\secpar})$ and $(\pk_1,\sk_1) \getsr \kg(1^{\secpar})$. Additionally, $\suc$ generates a random seed $\seed \getsr \{0,1\}^{\secpar}$ and sends $(\seed,\pk_0)$ to $\rec$. 
When $\adv$ queries $\Frok$ with $\msg{}{\seed}$, $\suc$ answers with $\msg{}{\pk_0 \oplus \pk_1}$, thus making $\adv$ fix $\pk_1$ for which $\suc$ knows the corresponding secret-key $\sk_1$. Upon receiving $(\otp_0,\otp_1,\ciph_0,\ciph_1)$ from $\adv$, $\suc$ uses $\sk_0$ and $\sk_1$ to decrypt both $\ciph_0$ and $\ciph_1$ and obtain $\padm_0$ and $\padm_1$ (respectively), thus enabling it to recover both messages of $\snd$. If the decryption of $\ciph_i$ fails, $\suc$ samples a random $\otmsg_i \getsr \{0,1\}^{\lenf}$. The simulator $\suc$ for the case where only $\snd$ is corrupted is presented in Figure~\ref{fig:sucsnd}.

\begin{boxfig}{Simulator $\suc$ for the case where only $\snd$ is corrupted.}{fig:sucsnd}
\begin{center}
\textbf{Simulator } $\suc$ (Corrupted $\snd$)
\end{center}

Let $\lenf$ be the length of messages and $\secpar$ be a security parameter.
Simulator $\suc$ interacts with an environment $\env$, functionality $\Fot$ and an internal copy $\adv$ of the adversary that corrupts only Alice, proceeding as follows:

\begin{enumerate}

\item $\suc$ simulates the answers to the random oracle queries from $\adv$ exactly as $\Frok$ and $\Frop$ would (and stores the lists of queries/answers), except when stated otherwise in $\suc$'s description.

\item $\suc$ generates two secret and public-key pairs $(\pk_0,\sk_0) \getsr \kg(1^{\secpar})$ and $(\pk_1,\sk_1) \getsr \kg(1^{\secpar})$. 
$\suc$ samples $\seed \getsr \{0,1\}^{\secpar}$ and aborts if $\seed$ has been already queried to $\Frok$. $\suc$ sends $(\seed,\pk_0)$ to $\adv$.

\item When $\adv$ queries $\Frok$ with $\msg{}{\seed}$, $\suc$ answers with $\msg{}{\pk_0 \oplus \pk_1}$.

\item Upon receiving $(\otp_0,\otp_1,\ciph_0,\ciph_1)$, $\suc$ proceeds as follows:
    \begin{enumerate}
        \item Decrypts $\ciph_0$ and $\ciph_1$, getting $\padm_0 \gets \dec(\sk_0,\ciph_0)$ and $\padm_1 \gets \dec(\sk_1,\ciph_1)$. If the decryption of $\ciph_i$ failed, $\suc$ samples $\otmsg_i \getsr \{0,1\}^{\lenf}$. Otherwise, $\suc$ recovers $\padm_i^{\prime}$ from the list $(\padm_i,\padm_i^{\prime})$ stored for $\Frop$ (or samples $\padm_i^{\prime} \getsr \bits^{\lenf}$ and stores the new pair if such a pair does not exist yet) and computes both messages $\otmsg_0 = \otp_0 \oplus \padm_0^{\prime}$ and $\otmsg_1 = \otp_0 \oplus \padm_0^{\prime}$.

        \item  $\suc$ sends $\msg{sender}{\otmsg_0,\otmsg_1}$ to $\Fot$.
    \end{enumerate}

 \item When $\adv$ halts, $\suc$ also halts and outputs whatever $\adv$ outputs.

\end{enumerate}

\end{boxfig}

The only points where the simulation differs from the real world protocol execution are in the sampling of two public-keys $\pk_0,\pk_1$ for which $\suc$ knows the corresponding secret-keys and in the answer $\msg{}{\pk_0 \oplus \pk_1}$ to the query $\msg{}{\seed}$ from $\adv$ to (the simulated) $\Frok$. 
Notice that due to Property \ref{ass:keys}, a public-key outputted by $\kg$ is indistinguishable from a key sampled uniformly at random from $\pksp$. Hence, the choice of public-keys in the simulation is indistinguishable from that of the real world protocol execution. The only way the simulation can still fail is if $\adv$ queries $\Frok$ with $\msg{}{\seed}$ before $\pk_0,\pk_1$ are chosen by $\suc$, which happens with negligible probability. Thus, a PPT environment $\env$ cannot distinguish an interaction with $\suc$ in the ideal world from an interaction with $\adv$ in the real world except with negligible probability.
\end{proof}

\subsection{Obtaining 1-out-of-k OT}

Our OT protocol can be easily extend to obtain 1-out-of-k OT, i.e., $\snd$ has $k$ input strings and $\rec$ can learn one of the strings that he chooses. The idea is that there are $k$ public-keys $(\pk_0,\ldots, \pk_{k-1})$ that are used to encrypted the strings using the same technique as before. In the modified protocol, $\rec$ generates a pair of keys $(\pk_c,\sk_c) \getsr \kg(1^{\secpar})$ for his choice value $c \in \{0,\ldots,k-1\}$. He also samples a random string $\seed \getsr \{0,1\}^{\secpar}$ and sends the queries $\msg{}{\seed\|1},\ldots,\msg{}{\seed\|k-1}$
to $\Frok$, obtaining $\msg{}{q_1},\ldots, \msg{}{q_{k-1}}$ as answers. $\rec$ then computes all $\pk_{i}$ for $i \in \{0,\ldots,k-1\} \setminus c$ in such way that $\pk_0 \star \pk_{j} =q_j$ for 
$j \in \{1,\ldots,k-1\}$. He sends $(\seed,\pk_0)$ to $\snd$. She uses the answers to the same random oracle queries as well as $\pk_0$ to reconstruct $(\pk_0,\ldots, \pk_{k-1})$.

\section{Adaptive Security}\label{sec:adp}
In this section we show that Protocol~$\pot$ is secure against adaptive adversaries. In the case of adaptive corruptions, the simulator has to handle adversaries that corrupt parties after the protocol execution has started, potentially after the execution is finished. When a party is corrupted in the ideal world, the simulator learns its inputs (and possibly outputs) and needs to hand it to its internal copy of the adversary along with the internal state of the dummy party (run internally by the simulator) corresponding to the corrupted party. This internal state must be consistent with both the inputs learned upon corruption and the messages already sent between dummy parties in the simulation. Usually it is hard to construct a simulator capable of doing so because it must first simulate a protocol execution with its internal copy of the adversary without knowing the inputs of uncorrupted parties and later, if a corruption happens, it must generate an internal state for the corrupted dummy party that is consistent with the newly learned inputs and the protocol messages that have already been generated. Intuitively, this requirement means that the simulator must simulate ``non-committing'' messages for honest dummy parties such that, upon corruption, it can generate randomness that would lead an honest party executing the protocol to generate the messages sent up to that point had it been given the inputs obtained form the ideal world party. 

In order to prove that Protocol~$\pot$ is indeed secure against adaptive adversaries, we will construct a simulator that generates messages in its internal execution with a copy of the adversary such that it can later come up with randomness that would result in these messages being generated given any input to any of the parties.
The general structure of this simulator is very similar to the simulator for the static case, using the same techniques for extracting inputs. 
The main modification in the simulator lies in the case where both the $\snd$ and $\rec$ are honest. In the proof of security against adaptive adversaries, the simulator no longer handles this case by simulating an interaction between two internal dummy parties running the protocol on random inputs. Instead, the simulator generates the first message (from $\rec$ to $\snd$) by acting as in the static case of a corrupted sender and the second message (from $\snd$ to $\rec$) by acting as in the static case of a corrupted receiver. From the proof of static security, this obviously generates a view that is indistinguishable from a real execution of the protocol. However, now the first message $(\seed,\pk_0)$ is such that both $\pk_0$ and $\pk_1$ (as defined by the protocol) are valid public-keys. Similarly, the second message $(\otp_0,\otp_1,\ciph_0,\ciph_1)$ contains random $\otp_0,\otp_1$ and $\ciph_0,\ciph_1$ containing $\padm_0, \padm_1$ that haven't yet been queried to $\Frop$.

Having these messages generated in simulating an execution between honest $\snd$ and $\rec$ will allow the simulator to ``explain'' the randomness used in an execution in case of an adaptive corruption.
In the case of an adaptive corruption of $\rec$, the simulator can hand the state of $\rec$ upon corruption to its internally executed adversary by simply giving the adversary $c$,$(\pk_c,\sk_c)$,$\seed$,$\msg{}{q}$,
$\pk_{1-c}$ and the random coins of $\kg(1^{\secpar})$ when generating $(\pk_c,\sk_c)$, given $c$ revealed when the the ideal world $\rec$ is corrupted. Since both $\pk_0$ and $\pk_1$ are valid and a valid public-key is indistinguishable from an invalid one, the state is consistent with $c$. In the case of an adaptive corruption of $\snd$, the simulator obtains $(\otmsg_0,\otmsg_1)$ by corrupting the ideal world $\snd$, sets the answers $\msg{}{\padm_0^{\prime}}$ and $\msg{}{\padm_1^{\prime}}$ to queries $\msg{}{\padm_0}$ and $\msg{}{\padm_1}$ to the simulated $\Frop$ such that $\otp_0=\padm_0^{\prime} \oplus \otmsg_0, \otp_1=\padm_1^{\prime} \oplus \otmsg_1$, and finally hands $\snd$'s state to the adversary as $\otmsg_0,\otmsg_1$, $\pk_0$, $\pk_1$ (computed as in the protocol), $\padm_0, \padm_1$, $\padm_0^{\prime}$, $\padm_1^{\prime}$ and message $(\otp_0,\otp_1,\ciph_0,\ciph_1)$.

\begin{theorem}\label{th:apotucme}
Let $\pke$ be a public-key encryption scheme that satisfies Properties \ref{ass:dowa} and \ref{ass:keys}. When instantiated with $\pke$, Protocol~$\pot$ UC-realizes the functionality $\Fot$ against adaptive adversaries in the $\Fro$-hybrid model.
\end{theorem}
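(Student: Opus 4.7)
The plan is to reuse the static simulator from the proof of Theorem~\ref{th:potucme} almost verbatim for all corruptions that occur before the protocol begins or after it ends, and to modify only the handling of an execution between two honest parties so that both protocol flows become ``non-committing,'' i.e., the simulator retains enough trapdoor information to later explain those flows as the honest transcript for any input revealed upon an adaptive corruption. Corruptions that occur after the two-party execution has terminated are handled by simply returning the stored (real or simulated) internal state.

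For the case where both $\snd$ and $\rec$ are initially honest, I would have $\suc$ generate the first flow exactly as the simulator of Figure~\ref{fig:sucsnd} does when only $\snd$ is corrupted: sample two fresh key pairs $(\pk_0,\sk_0),(\pk_1,\sk_1)\getsr\kg(1^{\secpar})$, pick $\seed\getsr\{0,1\}^{\secpar}$, program $\Frok(\seed)=\pk_0\star\pk_1$, and deliver $(\seed,\pk_0)$. I would then have $\suc$ generate the second flow as in Figure~\ref{fig:sucrec}: sample $\padm_0,\padm_1\getsr\{0,1\}^{\secpar}$ (aborting if either has already been queried to $\Frop$), sample $\otp_0,\otp_1\getsr\{0,1\}^{\lenf}$, compute $\ciph_i\getsr\enc(\pk_i,\padm_i)$, and deliver $(\otp_0,\otp_1,\ciph_0,\ciph_1)$. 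The key point is that $\suc$ now holds both secret keys and controls $\Frop$ on the two unqueried points $\padm_0,\padm_1$.

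Upon an adaptive corruption of $\rec$, $\suc$ learns $c$ from $\Fot$ together with the received message $\otmsg_c$, and hands $\adv$ the state $(c,\pk_c,\sk_c,\seed)$ together with the coins of $\kg$ that produced $(\pk_c,\sk_c)$; the value $\pk_{1-c}$ is then reconstructed exactly as the honest $\rec$ would do from $\Frok(\seed)$ and $\pk_c$. Upon an adaptive corruption of $\snd$, $\suc$ learns $(\otmsg_0,\otmsg_1)$ from $\Fot$, programs $\Frop(\padm_i):=\otp_i\oplus\otmsg_i$ for $i=0,1$ (aborting if $\adv$ had already queried $\padm_i$), and hands $\adv$ the full sender state, namely $(\otmsg_0,\otmsg_1,\pk_0,\pk_1,\padm_0,\padm_1,\padm_0',\padm_1')$ along with the encryption coins used to form $\ciph_0,\ciph_1$. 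In both subcases the view handed to $\adv$ is the unique honest transcript compatible with the revealed input and the simulated messages.

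The main obstacle is arguing indistinguishability of this two-honest branch from a real execution, since here $\suc$ does not have the static-case luxury of committing to a random input before the adversary can corrupt. I would handle this via a short hybrid argument. First, by Property~\ref{ass:keys}, replacing the honest $\rec$'s $\pk_{1-c}$ (which in the real protocol is $q\star\pk_c^{-1}$ for uniform $q$) by a freshly generated honest key pair is computationally indistinguishable; this legitimizes giving $\suc$ both secret keys. Second, replacing $\otp_{1-c}$ by a uniformly random string is indistinguishable as long as $\adv$ never queries $\padm_{1-c}$ to $\Frop$ before the corruption happens: a two-stage reduction that receives the challenge public keys from the Property~\ref{ass:dowa} experiment, embeds them as $\pk_0,\pk_1$, and watches $\Frop$-queries, would extract both $\padm_0$ and $\padm_1$ from a distinguisher that causes such a query, contradicting Property~\ref{ass:dowa}. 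Queries that occur only after an adaptive corruption cannot help the distinguisher, because by then $\suc$ has already programmed $\Frop$ consistently with the revealed inputs. The aborts triggered by prior queries on the random $\secpar$-bit values $\seed,\padm_0,\padm_1$ each occur with probability at most $\mathrm{poly}(\secpar)/2^{\secpar}\in\mathsf{negl}(\secpar)$, so summing all losses yields the required $\mathsf{IDEAL}_{\Fot,\suc,\env}\approx\mathsf{HYBRID}^{\Fro}_{\pot,\adv,\env}$.
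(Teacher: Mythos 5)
Your proposal matches the paper's proof in essence: the simulator for two honest parties generates the first flow as in the static corrupted-sender case (both key pairs known, $\Frok(\seed)$ programmed to $\pk_0\star\pk_1$) and the second flow as in the static corrupted-receiver case (random $\otp_0,\otp_1$, fresh unqueried $\padm_0,\padm_1$), and adaptive corruptions are explained by revealing the key-generation/encryption coins and programming $\Frop(\padm_i)=\otp_i\oplus\otmsg_i$ at corruption time, exactly as the paper does. Your concluding hybrid argument (Property~\ref{ass:keys} to justify two honestly generated keys, then Property~\ref{ass:dowa} to bound pre-corruption $\Frop$-queries on $\padm_0,\padm_1$, plus the negligible abort events) is the same reasoning the paper invokes via its static proof, only spelled out slightly more explicitly.
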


\begin{proof}
As in the case of an static adversary, we must construct a simulator $\suc$ such that no environment $\env$ can distinguish between interactions with an adversary $\adv$ in the real world and with $\suc$ in the ideal world. 
For the sake of clarity, we will describe the simulator $\suc$ separately for the different corruption scenarios. In all cases, $\suc$ writes all the messages received from $\env$ in $\adv$'s input tape, simulating $\adv$'s environment. Also, $\suc$ writes all messages from $\adv$'s output tape to its own output tape, forwarding them to $\env$.

First we handle the four cases corresponding to the corruption statuses (\textit{i.e.} corrupted or honest) of $\snd$ and $\rec$ before the execution starts:\newline

\noindent\textbf{Case 1: Both $\snd$ and $\rec$ are honest:} This case is the main point where the simulator for an adaptive adversary changes in relation to the static adversary case. In this case, the simulator will simulate an execution between honest $\snd$ and $\rec$ using random inputs. The easiest way to perform this simulation is to run the protocol exactly as honest parties would when given random inputs, which is the done in the proof of Theorem~\ref{th:potucme}. However, for handling adaptive corruptions, the simulator will generate ``non-committing'' messages for honest dummy parties in such a way that the simulated execution is indistinguishable from a real execution and that the simulator can later generate consistent randomness to claim that honest parties had been executing the protocol with an arbitrary input. The simulator for this case is described in Figure~\ref{fig:suchonest}. \newline

\begin{boxfig}{Simulator $\suc$ for the case where both $\snd$ and $\rec$ are honest.}{fig:suchonest}
\begin{center}
\textbf{Simulator } $\suc$ (Honest $\snd$ and $\rec$)
\end{center}

Let $\lenf$ be the length of the messages and $\secpar$ be the security parameter.
The simulator $\suc$ interacts with an environment $\env$, functionality $\Fot$ and an internal copy $\adv$ of the adversary that does not corrupt $\snd$ or $\rec$, only observing their interaction and querying $\Frok$ and $\Frop$. $\suc$ proceeds as follows:

\begin{enumerate}
\item $\suc$ simulates the answers to the random oracle queries from $\adv$ exactly as $\Frok$ and $\Frop$ would (and stores the lists of queries/answers), except when stated otherwise in $\suc$'s description.

\item Simulating the first message (from dummy $\rec$):
    \begin{enumerate}
        \item $\suc$ generates two secret and public key pairs $(\pk_0,\sk_0) \getsr \kg(1^{\secpar})$ and $(\pk_1,\sk_1) \getsr \kg(1^{\secpar})$. 
$\suc$ samples $\seed \getsr \{0,1\}^{\secpar}$ and aborts if $\seed$ has been already queried to $\Frok$. $\suc$ sends $(\seed,\pk_0)$ to the dummy $\snd$.

        \item When $\adv$ queries $\Frok$ with $\msg{}{\seed}$, $\suc$ answers with $\msg{}{\pk_0 \oplus \pk_1}$.
    \end{enumerate}

\item Simulating the second message (from dummy $\snd$):

    \begin{enumerate}
        \item Samples $\otp_0,\otp_1 \getsr \{0,1\}^{\lenf}$, $\padm_0, \padm_1 \getsr \{0,1\}^{\secpar}$ and $\rnd_0,\rnd_1 \getsr \rndsp$. If $\padm_0$ or $\padm_1$ has already been queried to $\Frop$, $\suc$ aborts. Otherwise, $\suc$ computes $\ciph_0 \gets \enc(\pk_0,\padm_0,\rnd_0)$ and $\ciph_1 \gets \enc(\pk_1,\padm_1,\rnd_1)$, and sends $(\otp_0,\otp_1,\ciph_0,\ciph_1)$ to dummy $\rec$.
    \end{enumerate}
    
\item $\suc$ resumes answering the random oracle queries from $\adv$ as above, except for the queries $\padm_0$ and $\padm_1$ to $\Frop$. Upon receiving such query $\padm_c$, $\suc$ aborts. When $\adv$ halts, $\suc$ also halts and outputs whatever $\adv$ outputs.

\end{enumerate}

\end{boxfig}

\noindent\textbf{Case 2: $\snd$ is corrupted and $\rec$ is honest:} This case is equivalent to an execution with a static adversary that corrupts only $\snd$. Hence, the simulator proceeds as in the proof of Theorem~\ref{th:potucme}.\newline

\noindent\textbf{Case 3: $\snd$ is honest and $\rec$ is corrupted:} This case is equivalent to an execution with a static adversary that corrupts only $\rec$. Hence, the simulator proceeds as in the proof of Theorem~\ref{th:potucme}.\newline

\noindent\textbf{Case 4: Both $\snd$ and $\rec$ are corrupted:} This case is equivalent to an execution with a static adversary that corrupts both $\snd$ and $\rec$. Hence, the simulator proceeds as in the proof of Theorem~\ref{th:potucme}.\newline

Now we analyze the adaptive corruption cases. 
Notice that the simulator proceeds to simulate as in the static case after handing the internal state of the corrupted dummy party to the adversary. 
This is the case because the messages generated by the simulator for honest dummy parties already allow the simulator to continue simulating using the same instructions.
In the case of a corrupted $\snd$, the message $(\seed,\pk_0)$ generated for the dummy receiver is already such that both $\pk_0$ and $\pk_1$ (if computed as in the protocol) are valid public-keys for which the simulator knows the corresponding secret-keys, allowing it to extract the adversary's messages.
In the case of a corrupted $\rec$, the message $(\otp_0,\otp_1,\ciph_0,\ciph_1)$ generated for the dummy sender is already such that $\padm_0$ or $\padm_1$ have not been queried to $\Frop$, allowing the simulator to set the answer to these queries in such a way that $\otmsg_c = \otp_c \oplus \padm_c^{\prime}$ for any arbitrary $\otmsg_c \in \{0,1\}^{\lenf}$ and any $c \in \{0,1\}$.
Hence, we focus describing how the simulator generates the internal states of dummy $\snd$ and dummy $\rec$ in case the adversary chooses to corrupt them during the simulation. We will handle the cases of adaptive corruptions of $\snd$ and $\rec$ separately.\newline

\noindent\textbf{Adaptive corruption of $\snd$:} When $\snd$ is corrupted in the ideal world, $\suc$ obtains its input $(\otmsg_0,\otmsg_1)$. In the simulated execution with its internal copy of the adversary $\adv$, $\suc$ must return both the input and a consistent internal state for the dummy $\snd$. We further divide this scenario in two cases regarding the point of the execution when $\snd$ becomes corrupted:

\begin{itemize}
    \item Before the second message (sent by $\snd$): In this case, dummy $\snd$'s internal state consists solely of the inputs $(\otmsg_0,\otmsg_1)$. $\suc$ returns $(\otmsg_0,\otmsg_1)$ to $\adv$. 
    
    \item After the second message  (sent by $\snd$): In this case, $\suc$ must return to $\adv$ both the input $(\otmsg_0,\otmsg_1)$ and the internal state of dummy $\snd$ in a way that it is consistent with both the input and the message $(\otp_0,\otp_1,\ciph_0,\ciph_1)$ sent by dummy $\snd$. In both the cases of an honest $\rec$ and of a dishonest $\rec$, the second message is computed by $\suc$ in the same way and it knows $\padm_0, \padm_1$. Hence, $\suc$ computes $\padm_0^{\prime}=\otp_0 \oplus \otmsg_{0}$, $\padm_1^{\prime}=\otp_1 \oplus \otmsg_{1}$ and returns $(\otmsg_0,\otmsg_1,\pk_1,\padm_0,\padm_1,\padm_0^{\prime},\padm_1^{\prime},\rnd_0,\rnd_1)$ to $\adv$ as the dummy $\snd$'s state. Furthermore, if $\adv$ queries $\Frop$ with $\padm_i$, for $i \in \{0,1\}$, answers with the value $\otp_i \oplus \otmsg_{i}$.
\end{itemize}

If $\rec$ is honest, $\suc$ continues the simulation from the point where the corruption happened as in Case 2 (where only $\snd$ is corrupted). Otherwise, it continues as in Case 4 (where both $\snd$ and $\rec$ are corrupted).\newline

\noindent\textbf{Adaptive corruption of $\rec$:} When $\rec$ is corrupted in the ideal world, $\suc$ obtains its input $c$. In the simulated execution with its internal copy of the adversary $\adv$, $\suc$ must return both the input and a consistent internal state for the dummy $\rec$. We further divide this scenario in two cases regarding the point of the execution when $\rec$ becomes corrupted:

\begin{itemize}
    \item Before the first message (sent by $\rec$): In this case, dummy $\rec$'s internal state consists solely of the input $c$. $\suc$ returns $c$ to $\adv$. 
    
    \item Between the first message (sent by $\rec$) and the second message (sent by $\snd$): In this case, $\suc$ must return to $\adv$ both the input $c$ and the internal state of dummy $\rec$ in such a way that it is consistent with both the input and the first message $(\seed,\pk_0)$ sent by dummy $\rec$. In both the cases of an honest $\snd$ and of a dishonest $\snd$, the first message is computed by $\suc$ in the same way and he knows both key pairs $(\pk_0,\sk_0)$ and $(\pk_1,\sk_1)$, as well as the randomness used to generate each of them. Hence, $\suc$ computes $q=\pk_0 \oplus \pk_1$ and returns $(c,\sk_c,\pk_c,\rnd_c,q,\pk_{1-c})$ to $\adv$ as the internal state of dummy $\rec$, where $\rnd_c$ is the randomness used by $(\pk_c,\sk_c) \getsr \kg(1^{\secpar})$. Notice that in the case where $\snd$ and $\rec$ are honest as well as in the case where only $\snd$ is corrupted, $\suc$ would answer a query $\msg{}{\seed}$ to $\Frok$ with $\msg{}{q}$, where $q=\pk_0 \oplus \pk_1$. Hence, when $\adv$ queries $\Frok$ with $\msg{}{\seed}$, $\suc$ consistently answers with $\msg{}{q}$.
    
    \item After the second message (sent by $\snd$): In this case, $\suc$ also obtains $\rec$'s output $\otmsg_c$ in the ideal world. Hence, $\suc$ must return to $\adv$ the input $c$, randomness that is consistent with the first message and, additionally, randomness that is consistent with the second message. The first message does not depend on the output, so $\suc$ computes $(\sk_c,\pk_c,\rnd_c,q,\pk_{1-c})$ and handles queries to $\Frok$ as in the previous case.
    If $\snd$ is corrupted, $\suc$ retrieves $(\padm_c,\padm_c^{\prime})$ from the internal list of the simulated $\Frop$ (since either $\adv$ or $\suc$ made this query). Otherwise, if $\snd$ is honest, $\suc$ sets $\padm_c^{\prime}=\otp_c \oplus \otmsg_{c}$. Finally, $\suc$ returns $(\sk_c,\pk_c,\rnd_c,q,\pk_{1-c},\padm_c^{\prime})$ to $\adv$ as dummy $\rec$'s internal state. Furthermore, upon receiving a query $\padm_c$ from $\adv$ to $\Frop$, $\suc$ answers with the value $\otp_c \oplus \otmsg_{c}$.
\end{itemize}

If $\snd$ is honest, $\suc$ continues the simulation from the point where the corruption happened as in Case 3 (where only $\rec$ is corrupted). Otherwise, it continues as in Case 4 (where both $\snd$ and $\rec$ are corrupted).
\end{proof}

\section{Instantiations of the OT Protocol}\label{sec:inst}

\subsection{Instantiation Based on the McEliece Cryptosystem}\label{sec:mepke}

Let $n$, $k$ and $t$ be functions of the security parameter $\secpar$ (for easy of readability this will not be explicit in the notation). Consider a linear-error correcting code $C$ with length $n$ and dimension $k$, which consists of a $k$-dimensional subspace of $\Ft^{n}$, and 
let $G \in \Ft^{k \times n}$ denote the generator matrix of $C$. For the parameters $n$, $k$ and $t$, the assumption on the hardness of the bounded decoding problem can be stated as follows. 

\begin{assumption}\label{ass:bdec}
For parameters $n$, $k$ and $t$ that are functions of the security parameter $\secpar$, sample the generator matrix $G \getsr \Ft^{k \times n}$, the message $x \getsr \Ft^{k}$ and a uniformly random error $e \in \Ft^{n}$ such that $\HW(e)= t$. Then for $y=xG\xor e$ and for every PPT decoder $\adv$ 
\[
\Pr[\adv(G,y) = x ]  \in \mathsf{negl}(\secpar).
\]
\end{assumption}

We next prove some useful facts that will be used in our constructions, these observations follow the lines of Mathew et al. \cite{MVVR12}. Lets call a fixed generator matrix $G$ good if there exists a PPT decoder that can recover the message $x$ with non-negligible probability when encoded with this specific $G$. Clearly if the assumption on the hardness of the bounded decoding problem holds, then the subset of good generator matrices constitutes only a negligible fraction of all generator matrices.

\begin{observation}[\cite{MVVR12}]\label{obs:den}
The fraction of matrices $Q \in \Ft^{k \times n}$ that can be expressed as $Q=G_1\xor G_2$ for good generator matrices $G_1$ and $G_2$ is negligible.
\end{observation}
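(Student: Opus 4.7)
The plan is a direct counting argument against the density of good generator matrices. First I would introduce $S \subseteq \Ft^{k \times n}$ as the set of good generator matrices, i.e., the matrices $G$ for which some PPT decoder recovers a uniform $x$ from $(G, xG \oplus e)$ with non-negligible probability (over a uniform message and a uniform weight-$t$ error). As remarked in the paragraph preceding the observation, Assumption~\ref{ass:bdec} combined with a standard Markov-style averaging argument on the per-matrix decoding success probability forces $\epsilon := |S|/2^{kn}$ to be a negligible function of $\kappa$.

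Next I would bound the density of
$$T = \{Q \in \Ft^{k \times n} : Q = G_1 \oplus G_2 \text{ for some } G_1, G_2 \in S\}.$$
Since $T$ is the image of the addition map $S \times S \to \Ft^{k \times n}$, we have $|T| \le |S|^2$. Equivalently, a union bound over the choice of $G_1$ gives
$$\Pr_{Q \getsr \Ft^{k \times n}}[Q \in T] \le \sum_{G_1 \in S} \Pr_{Q}[Q \oplus G_1 \in S] = \frac{|S|^2}{2^{kn}}.$$
Under the McEliece parameter regime used throughout the paper, Assumption~\ref{ass:bdec} is taken to be (super-)exponentially hard in the code length, so $\epsilon$ is small enough that $\epsilon^2 \cdot 2^{kn}$ remains negligible in $\kappa$, and the observation follows.

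The step I expect to be most delicate is the quantitative bookkeeping: the naive bound $|S|^2/2^{kn} = \epsilon^2 \cdot 2^{kn}$ is only negligible once $\epsilon$ decays faster than $2^{-kn/2}$, so the proof must invoke the usual strong-hardness reading of Assumption~\ref{ass:bdec} rather than a bare ``negligible'' clause. Should a tighter argument be preferred, an alternative I would attempt is to stratify $S$ by the PPT decoder $\mathsf{D}$ that makes a matrix good, and then give an explicit reduction that, given oracle access to membership in $T$, samples a uniform $G_1$, sets $G_2 = G \oplus G_1$, and exploits the fact that $(G_1, G_2)$ is distributed identically to two independent uniform matrices in order to contradict Assumption~\ref{ass:bdec}; however, for the parameter choices of interest the direct counting argument already suffices and is the cleaner route.
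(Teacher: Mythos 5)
Your proposal is correct and takes essentially the same route as the paper, whose entire proof is the same counting bound: with $g$ good generator matrices there are at most $\binom{g}{2}=O(g^2)$ matrices expressible as the exclusive-or of two good ones, hence a negligible fraction of $\Ft^{k \times n}$. The quantitative caveat you flag --- that negligibility of $g^2/2^{kn}$ really requires the density of good matrices to be below roughly $2^{-kn/2}$, i.e., a strong reading of Assumption~\ref{ass:bdec} --- is left implicit in the paper as well, so your bookkeeping is, if anything, more careful than the original.
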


\begin{proof}
This follows from a simple counting argument: if there are $g$ good generator matrices, there can be at most $\binom{g}{2}=O(g^2)$ matrices $Q$ that can be expressed as the exclusive-or of two good generator matrices.
\end{proof}

One possible instantiation of our OT protocol uses the McEliece cryptosystem, which has $\pksp = \Ft^{k \times n}$, $\messp = \Ft^k$, $\ciphsp=\Ft^n$, and works as follows:

\begin{itemize}
\item Key generation: Generate a generator matrix $G \in \Ft^{k \times n}$ of a Goppa code together with the efficient error-correction algorithm $\mathsf{Correct}$ that can correct up to $t$ errors. Generate a uniformly random non-singular matrix $S \in \Ft^{k \times k}$ and a uniformly random permutation matrix $T \in \Ft^{n \times n}$. Set $\pk=SGT$ and $\sk=(S,G,T)$.

\item Encryption: Given as input the public-key $\pk$ and the message $\mes \in \Ft^k$, sample uniformly at random an error vector $e \in \Ft^{n}$ such that $\HW(e)= t$ and output the ciphertext $\ciph \gets \mes \cdot \pk \xor e$.

\item Decryption: Given the ciphertext $\ciph$ and the secret-key $\sk$ as input, compute $\ciph \cdot T^{-1}=(\mes S)G\xor eT^{-1}$. Then compute 
$\mes S \gets \mathsf{Correct}(cT^{-1})$ and output $\mes = (\mes S)S^{-1}$.
\end{itemize}

The security of the McEliece cryptosystem relies on Assumption \ref{ass:bdec}. Therefore, when considering the operation $\xor$ on $\pksp$, Observation \ref{obs:den} implies that Property \ref{ass:dowa} holds straightforwardly if 
Assumption \ref{ass:bdec} is true. The security of the McEliece encryption scheme is also based on the following assumption regarding the pseudorandomness of the public-keys, which is equivalent to Property \ref{ass:keys} for this cryptosystem.

\begin{assumption}\label{ass:goppa}
Let $(\pk,\sk)\getsr \kg(1^\secpar)$ be the key generation algorithm of the McEliece cryptosystem for $\pk \in \Ft^{k \times n}$. Let $R \getsr \Ft^{k \times n}$. For every PPT distinguisher $\adv$ that outputs a decision bit it holds that 
\[
| \Pr[\adv(\pk) = 1 ] - \Pr[\adv(R) = 1 ] | \in \mathsf{negl}(\secpar).
\]
\end{assumption}

\subsection{QC-MDPC based Instantiation}
The QC-MDPC cryptosystem \cite{misoczki2013mdpc} is a variant of the McEliece cryptosystem that has much shorter keys. Currently, it is the most efficient code-based public-key cryptosystem. We briefly describe it here. In the following $\wt(x)$ denotes the Hamming weight of a vector $x$. 


\begin{itemize}
\item Key generation: Let $\cir(v)$ denote the circulant matrix whose first row is $v \in \F_2^r$, a binary array of length $r$. A QC-MDPC secret-key is a sparse parity-check matrix of form $\hat{H} = \left[ \cir(f) \mid \cir(g) \right]$ where $\gcd(g, x^r - 1) = 1$ and $\wt(f) + \wt(g) = t$ (for some suitable choice of $r$ and $t$), and the corresponding public-key is the systematic parity-check matrix $H = \left[ \cir(h) \mid I \right]$ where $h = f \cdot g^{-1} \mod x^r - 1$, or equivalently the systematic generator $G = \left[ I \mid \cir(h)^T \right]$, both represented by $h \in \F_2^r$. .

\item Encryption: We encrypt a message by encoding it as a vector $e = (e_0, e_1) \in \F_2^{2r}$ of weight $\wt(e) = t$, choosing a uniformly random $m \samples \F_2^r$ and computing the ciphertext as $c \gets mG + e \in \F_2^{2r}$.

\item Decryption: Let $\Psi_{\hat{H}}$ denote a $t$-error correcting decoding algorithm that has knowledge of the 
secret-key $\hat{H}$. 
Extract the error vector $e$ by decoding $c \gets \Psi_{\hat{H}}(c)$.

\end{itemize}

\begin{assumption}
The \emph{decisional QC-MDPC assumption} states that distinguishing $h$ from a uniformly random vector $u \in \F_2^r$ is unfeasible.
\end{assumption}

The OW-CPA security of this cryptosystem follows from the pseudorandomess of the public-key (the decisional QC-MDPC assumption) and the hardness of the bounded decoding problem (Assumption~\ref{ass:bdec}), exactly as in the original McEliece. 

The QC-MDPC cryptosystem as described here does satisfy the requirements to be used within our general construction. The analysis is exactly the same as in the original McEliece PKC in Section \ref{sec:mepke} and boils down to two arguments: 
\begin{itemize}
    \item The fraction of vectors $q \in \F_2^r$ that can be expressed as $q=h_1\xor h_2$ for good vectors $h_1$ and $h_2$ is negligible. Here, a good vector $h_i, i \in \{0,1\}$ is one where the corresponding parity-check matrix $H_i = \left[ \cir(h) \mid I \right]$ has a trapdoor. This fact and the hardness of the bounded decoding problem (Assumption \ref{ass:bdec}) imply Property \ref{ass:dowa}. 
    \item The pseudorandomness of the public-key implies Property \ref{ass:keys}. 
\end{itemize}


\subsection{LPN-based Instantiation}\label{sec:lpnpke}

We also present a solution based on the Learning Parity with Noise (LPN) problem. This problem essentially states that given a system of binary linear equations in which the outputs are disturbed by some noise, it is difficult to determine the solution. In this work we use the low noise variant of LPN as first studied by Alekhnovich in~\cite{FOCS:Alekhnovich03}. The following equivalent statement of the assumption is from D\"{o}ttling et al.~\cite{AC:DotMulNas12}.

\begin{assumption}\label{ass:lpn}
Let the problem parameter be $n \in \mathbb{N}$, which is a function of the security parameter $\secpar$. Let $m=O(n)$, $\epsilon >0$ and
$\rho=\rho(n)=O(n^{-1/2-\epsilon})$. Sample $B \getsr \Ft^{m \times n}$, $x \getsr \Ft^{n}$ and $e \getsr \chi_\rho^m$. The problem is, given $B$ and $y \in \Ft^m$,
to decide whether $y=Bx+e$ or $y \getsr \Ft^m$. The assumption states that for every PPT distinguisher $\adv$ that outputs a decision bit it holds that 
\[
| \Pr[\adv(B,y=Bx+e) = 1 ] - \Pr[\adv(B,y \getsr \Ft^m) = 1 ] | \in \mathsf{negl}(\secpar).
\]
\end{assumption}

The current best distinguishers for this problem require time of the order $2^{\secpar^{1/2-\epsilon}}$ and 
for this reason by setting $n = O(\secpar^{2/(1-2\epsilon)})$ where $\secpar$ is the security parameter of the 
encryption scheme the hardness is normalized to $2^{\Theta(\secpar)}$.

Our OT protocol can be instantiated based on the IND-CPA secure cryptosystem of D{\"o}ttling et al.~\cite{AC:DotMulNas12} 
that is based on the low noise variant of LPN. We should emphasize that we just need the simplest 
version of their cryptosystem, which is described below. The IND-CPA security of their cryptosystem is based on Assumption \ref{ass:lpn}. 

Consider the security parameter $\secpar$, and let $n, \ell_1, \ell_2 \in O(\secpar^{2/(1-2\epsilon)})$ and  $\rho \in O(\secpar^{-(1+2\epsilon)/(1-2\epsilon)})$ so that Assumption \ref{ass:lpn} is believed to hold. Let 
$G \in \Ft^{\ell_2 \times n}$ 
be the generator-matrix of a binary linear error-correcting code $C$ and 
$\mathsf{Decode}_C$ an efficient decoding procedure for $C$ that
corrects up to $\alpha \ell_2$ errors for constant $\alpha$.

\begin{itemize}

\item Key Generation: Let $A \getsr \Ft^{\ell_1 \times n}$, $T \getsr \chi_\rho^{\ell_2 \times \ell_1}$ and $X \getsr \chi_\rho^{\ell_2 \times n}$.
Set $B=TA+X$, $\pk=(A, B)$ and $\sk=T$. Output $(\pk, \sk)$.

\item Encryption: Given a message $\mes \in \Ft^{n}$ and the public-key $\pk=(A, B)$ as input, sample $s \getsr \chi_\rho^{n}$, $e_1 \getsr \chi_\rho^{\ell_1}$ and $e_2 \getsr \chi_\rho^{\ell_2}$. Set $\ciph_1=As+e_1$ and $\ciph_2=Bs+e_2+G\mes$. Output $\ciph=(\ciph_1,\ciph_2)$.

\item Decryption: Given a ciphertext $\ciph=(\ciph_1,\ciph_2)$ and a secret-key $\sk=T$ as input,
compute $y \gets \ciph_2 - T\ciph_1$ and $\mes \gets \mathsf{Decode}_C(y)$. Output $\mes$.
\end{itemize}

If Assumption \ref{ass:lpn} holds, then Property \ref{ass:keys} trivially holds for this cryptosystem as the public-keys are pseudorandom \cite{AC:DotMulNas12}. Based on Assumption \ref{ass:lpn}, D\"ottling et al. \cite{AC:DotMulNas12} also proved that this cryptosystem is IND-CPA secure \cite{AC:DotMulNas12} (which is a stronger notion than OW-CPA). By letting $\pksp=\Ft^{\ell_1 \times n} \times \Ft^{\ell_2 \times n}$, considering the operation $\xor$ on $\pksp$ and using a counting argument about the good public-keys as in Section \ref{sec:mepke}, we obtain that Property \ref{ass:dowa} trivially holds for this cryptosystem if Assumption \ref{ass:lpn} holds.

\subsection{Instantiation based on the CDH assumption}\label{sec:cdh}
We will instantiate our scheme by showing that the ElGamal cryptosystem is OW-CPA secure and has Properties \ref{ass:dowa} and \ref{ass:keys} under the Computational Diffie-Hellman (CDH) assumption. First we will recall the CDH assumption:

\begin{assumption}\label{ass:cdh}
The Computational Diffie-Hellman assumption requires that for every PPT adversary $\adv$
it holds that 
\[
\Pr[\adv(\G,w,g,g^a,g^b) = g^{ab} ] \in \mathsf{negl}(\secpar).
\]
where the probability is taken over the experiment of generating a group $\G$ of order $w$ with a generator $g$ on input $1^\secpar$ and choosing $a,b \getsr \Z_q$.
\end{assumption}

The classical ElGamal cryptosystem~\cite{C:ElGamal84} is parametrized by a group $(\G,g,w)$ of order $w$ with generator $g$ where the CDH assumption holds. We assume that $(\G,g,w)$ is known by all parties. The cryptosystem consists of a triple of algorithms $\pke=(\kg,\enc,\dec)$ that proceed as follows:

\begin{itemize}

    \item $\kg$ samples $\sk \getsr \Z_w$, computes $\pk=g^{\sk}$ and outputs a secret and public-key pair $(\pk,\sk)$.
    
    \item $\enc$ takes as input a public-key $\pk$ and a message $\mes \in \G$, samples $\rnd \getsr \Z_p$ computes $c_1=g^\rnd$, $c_2=m \cdot \pk^{\rnd}$ and outputs a ciphertext $\ciph=(c_1,c_2)$.
    
    \item $\dec$ takes as input a secret-key $\sk$, a ciphertext $\ciph$ and outputs a message $\mes = c_2/c_1^{\sk}$.
\end{itemize}

The ElGamal cryptosystem described above is well-known to be OW-CPA secure~\cite{C:ElGamal84}, leaving us to prove that it has Properties~\ref{ass:dowa} and~\ref{ass:keys}.
Property~\ref{ass:keys} follows trivially from the fact that $\pk$ is chosen uniformly over all elements of $\G$.

\begin{observation}\label{obs:cdhdowa}
The ElGamal cryptosystem described above has Property~\ref{ass:dowa} under the CDH assumption.
\end{observation}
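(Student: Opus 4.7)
The plan is to reduce the Computational Diffie-Hellman assumption to winning the Property~\ref{ass:dowa} game for ElGamal, so that Observation~\ref{obs:cdhdowa} follows. Given a CDH challenge $(g^a, g^b) \in \G^2$, I would invoke the Property~\ref{ass:dowa} adversary $\adv = (\adv_1, \adv_2)$ by first setting $q \gets g^a$ and feeding $q$ to $\adv_1$. Since $g^a$ is uniform in $\G = \pksp$, this matches the distribution specified by the game, and $\adv_1$ returns some $(\pk_1, \pk_2)$ with $\pk_1 \cdot \pk_2 = g^a$.

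Next I would simulate the two challenge ciphertexts while implicitly embedding $g^b$ into the encryption randomness of both. Sample independent $r_1', r_2' \getsr \Z_w$ and $R_1, R_2 \getsr \G$, and set $\ciph_i \gets (g^b \cdot g^{r_i'},\, R_i)$ for $i \in \{1,2\}$. Defining $\rnd_i := b + r_i' \bmod w$ and $\mes_i := R_i \cdot \pk_i^{-\rnd_i}$, one can rewrite $\ciph_i = (g^{\rnd_i},\, \mes_i \cdot \pk_i^{\rnd_i})$, which is exactly an honestly generated ElGamal ciphertext of the implicit message $\mes_i$ under $\pk_i$. The joint distribution of $(\ciph_1, \ciph_2)$ then matches the real Property~\ref{ass:dowa} game: the $\rnd_i$ are independent and uniform in $\Z_w$ because each $r_i'$ is, and the implicit $\mes_i$ are independent and uniform in $\G$ because each $R_i$ is. Consequently, the adversary's winning probability in the simulation equals its success probability in the original game.

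Upon receiving $(\widetilde{\mes_1}, \widetilde{\mes_2})$ from $\adv_2$, I would compute $Y_i \gets R_i / (\widetilde{\mes_i} \cdot \pk_i^{r_i'})$ for $i = 1, 2$, using knowledge of $r_i'$ to evaluate $\pk_i^{r_i'}$ directly. Whenever the adversary wins, a direct calculation yields $Y_i = \pk_i^b$, and the CDH answer is recovered as $g^{ab} = (\pk_1 \cdot \pk_2)^b = Y_1 \cdot Y_2$. The main subtlety, and the step that dictates the design of the reduction, is the need to embed $g^b$ into \emph{both} ciphertexts via independent shifts $r_1', r_2'$: recovering a single $\pk_i^b$ does not reveal $g^{ab}$, so the reduction crucially exploits the Property~\ref{ass:dowa} adversary's ability to decrypt both ciphertexts simultaneously, while the independence of $r_1'$ and $r_2'$ ensures that the two pieces of randomness look independent and uniform, as the real game requires.
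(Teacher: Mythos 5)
Your reduction is correct, but it takes a genuinely different route from the paper. The paper argues by contradiction at an informal level: it asserts that, under CDH, an adversary winning the Property~\ref{ass:dowa} experiment ``must know'' both secret keys $\sk_1,\sk_2$, and since $\pk_1\cdot\pk_2=q$ for uniformly random $q$, the adversary would then know $\log_g q=\sk_1+\sk_2$, contradicting the hardness of discrete logarithms (which CDH hardness implies). You instead give an explicit black-box reduction: embed the CDH challenge by setting $q=g^a$, build the two challenge ciphertexts as $(g^b\cdot g^{r_i'},R_i)$ with fresh $r_i'\getsr\Z_w$, $R_i\getsr\G$, observe that these are exactly ElGamal encryptions of implicitly defined uniform independent messages under randomness $\rnd_i=b+r_i'$, and recover $\pk_i^b=R_i/(\widetilde{\mes_i}\cdot\pk_i^{r_i'})$ from each correct answer, so that $Y_1\cdot Y_2=(\pk_1\cdot\pk_2)^b=g^{ab}$. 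The distributional checks you make (uniformity of $q$, independence and uniformity of the $\rnd_i$ and of the implicit messages) are exactly what is needed, and the success probability of the reduction equals that of the adversary. What your approach buys is rigor: it avoids the paper's knowledge-extraction step (``must know $\sk_1$ and $\sk_2$''), which is not a consequence of CDH --- indeed the natural way to break ElGamal one-wayness is to solve a CDH instance without ever learning the secret key --- and it crucially exploits that the adversary must decrypt \emph{both} ciphertexts, which is precisely where $(\pk_1\cdot\pk_2)^b=q^b$ becomes the CDH answer. The paper's argument is shorter and conveys the intuition, but yours is the tighter and more self-contained proof of the observation.
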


\begin{proof}
First we observe that $\pksp$ is $\G$, which is a group. Assume by contradiction that an adversary $\adv$ succeeds in the experiment of Property~\ref{ass:dowa}. Under the CDH assumption, $\adv$ must know both $\sk_1$ and $\sk_2$ corresponding to $\pk_1$ and $\pk_2$. However, we know that $\pk_1 \cdot \pk_2=q$ for a uniformly random $q \getsr \G$ (using multiplicative notation for $\G$). If $\adv$ freely generated $\pk_1$ and $\pk_2$ such that $\pk_1 \cdot \pk_2=q$ and knows secret-keys $\sk_1$ and $\sk_2$, then it knows the discrete logarithm of $q$, since it is equal to $\sk_1+\sk_2$. The CDH assumption implies that computing discrete logarithms is hard, hence we have a contradiction and the observation holds.
\end{proof}

\subsubsection{A CDH based instantiation matching the efficiency of Simplest OT \texorpdfstring{~\cite{LC:ChoOrl15}}{(Chou and Orlandi Latincrypt 2015)}}
If we instantiate Protocol~\pot with the ElGamal cryptosystem described above in a black-box way, the sender $\snd$ has to compute and send both ciphertexts $\ciph_0=(g^{\rnd_0},\otmsg_0 \cdot \pk_0^{\rnd_0})$ and $\ciph_1=(g^{\rnd_1},\otmsg_1 \cdot \pk_1^{\rnd_1})$, which amounts to 4 exponentiations and 4 group elements. However, notice that $\otmsg_0$ and $\otmsg_1$ are being encrypted under two different public-keys $\pk_0$ and $\pk_1$. Hence, we can invoke a result by Bellare \textit{et al.}~\cite{PKC:BelBolSta03} showing that ciphertexts $\ciph_0$ and $\ciph_1$ can be computed with the same randomness $\rnd$. This simple observation can increase both computational and communication efficiency, since terms of the ciphertexts that depend only on the randomness only need to be computed and sent once.

Basically, the result of Bellare \textit{et al.}~\cite{PKC:BelBolSta03} shows that randomness can be reused while maintaining the same underlying security guarantees when encrypting multiple messages under multiple different public-keys with ``reproducible" cryptosystems, of which the ElGamal cryptosystem is an example as proven in~\cite{PKC:BelBolSta03}. Bellare \textit{et al.} provide a modular generic reduction showing that it can be used to prove that reproducible cryptosystems remain IND-CPA or IND-CCA secure under randomness reuse. Since the ElGamal cryposystem is IND-CPA secure under the DDH assumption, we can directly optimize the ElGamal based instantiation above under the DDH assumption. However, we remark that  the reduction of Bellare \textit{et al.} is generic (depending only on the cryptosystem being reproducible) and straightforward to adapt to OW-CPA security, which is naturally implied by IND-CPA security. Hence, applying the techniques of Bellare \textit{et al.} we can show that the ElGamal cryptosystem remains OW-CPA secure with randomness re-use under the CDH assumption when multiple different messages are encrypted under multiple different keys because it is a reproducible cryptosystem (as already proven in~\cite{PKC:BelBolSta03}). Due to space restrictions and the lack of novelty, we leave a full discussion of the application of the result of Bellare \textit{et al.}~\cite{PKC:BelBolSta03} to OW-CPA secure cryptosystems to the full version of this paper.

Reusing the same randomness for both $\ciph_0$ and $\ciph_1$ means that now $\snd$ only has to compute $g^\rnd$ once and send $g^\rnd,\otmsg_0 \cdot \pk_0^{\rnd},\otmsg_1 \cdot \pk_1^{\rnd}$, meaning that it saves 1 exponentiation in terms of computation and 1 group element in terms of communication. In fact, the total number of exponentiations in this instantiation is only $5$ matching the computational efficiency of the Simplest OT protocol of Chou and Orlandi~\cite{LC:ChoOrl15}, which is also proven secure in the ROM but only provides security against static adversaries DDH assumption over a gap-DH group. On the other hand, this instantiation requires two extra group elements to be communicated. Nevertheless, at this small cost in communication, this instantiation provides security against adaptive adversaries under a weaker assumption.

\subsection{LWE-based Instantiation}\label{sec:lwe}

In order to instantiate our scheme under the LWE assumption, we observe that an IND-CPA secure (and thus OW-CPA secure) public-key encryption scheme proposed by Peikert \textit{et al.}~\cite[Section 7.2]{C:PeiVaiWat08} satisfies Properties~\ref{ass:dowa} and~\ref{ass:keys}. First, notice that it is proven to have Property~\ref{ass:keys} in~\cite[Lemma 7.3]{C:PeiVaiWat08}. The public-key space is $\pksp = \Z_q^{m \times n} \times \Z_q^{m \times l}$, forming a group with its addition operation $+$. It is pointed out \cite[Section 7.2]{C:PeiVaiWat08} that the public-keys of this cryptosystem are such that they embed several samples of the LWE distribution. Informally, an adversary who succeeds in the experiment of Property \ref{ass:dowa} must know both secret-keys $\sk_1$ adn $\sk_2$ associated to $\pk_1$ and $\pk_2$, respectively. If we have that $\pk_1 + \pk_2 = q$ for a random $q \getsr \pksp$, an adversary who manages to learn the corresponding $\sk_1$ and $\sk_2$ is able to decide whether $q$ is an instance of the LWE distribution or not, which contradicts the LWE assumption. Due to the lack of space, we leave a formal argument to the full version of this paper.

\subsection{Other Instantiations} 

In the full version of the paper we will present instantiations based on Quadratic Residuosity (QR), Decisional Composite Residuosity (DCR) and NTRU assumptions.

\section{Implementation}
Here, we propose concrete parameters and present implementation results for the QC-MDPC instantiation of our OT protocol. As the total cost of the entire OT protocol will be dominated by the key generation, encryption and decryption costs of the underlying public-key cryptosystem, we restrict our presentation to these respective costs. 

\subsection{Parameters}
We computed the classical security level of a few parameter sets for QC-MDPC McEliece using a SAGE script. The quantum level is one half of the values here presented (e.g. to obtain 128-bit quantum secure parameters, take the 256-bit classically secure ones). The  communication complexity is $2n$ bits for each party. Results are shown in table \ref{table:1}.

\begin{table}[h!]
\centering
 \begin{tabular}{|c | c | c | c | c|} 
 \hline
  Security Level & Codimension $r$ & Code length $n$ &Row density $w$ & Introduced errors $t$  \\  
 \hline
 256 & 32771 & 2*32771 & 2*137 & 264 \\ 
 192 & 19853 & 2*19853 & 2*103 & 199 \\
 128 & 10163 & 2*10163 & 2*71 & 134  \\ 
 \hline
 \end{tabular}
\label{table:1}
\vspace{0.08in}
\caption{Parameters for QC-MDPC PKC}
\end{table}
\vspace{-0.2in}
We recall that it is necessary that $r$ be prime and that the polynomial $(x^r - 1)/(x - 1)$ be irreducible over GF(2). There exists a trade-off between communication costs and decoding complexity. For instance, these parameters are quite tight and decoding may be somewhat slow. By increasing r to, say, r = 49139 (50\% more bandwidth), decoding speed roughly quadruples (encoding becomes about half as fast, but it is far faster than decoding to begin with).
\subsection{Implementation}

Some running times are shown in Table \ref{table:2}. These results are for a (non-vectorized) C implementation, compiled with 64-bit gcc under Windows 10, on an Intel i7-5500U @ 2.4 GHz with TurboBoost disabled. Running times are measured in kilocycles. The underlying PRNG is ChaCha20. The implementation is mostly isochronous, to prevent side-channel timing attacks.

\begin{table}[h!]
\centering
 \begin{tabular}{|c | c | c | c|} 
 \hline
  Security Level & Key generation & Encryption $n$ &Decryption  \\ 
 \hline
 256 & 507.48 kcyc & 483.51 kcyc & 10,337.41 kcyc \\ 
 128 & 101.59 kcyc & 73.85 kcyc & 1,758.66 kcyc \\
 80 & 67.45 kcyc & 52.49 kcyc & 497.53 kcyc  \\  
 \hline
 \end{tabular}
 \vspace{0.08in}
 \caption{Running Times (in kilocycles) for QC-MDPC PKC.}
\label{table:2}
\end{table}

\section{Conclusions}\label{sec:con}

In this work we presented a framework for obtaining efficient round-optimal UC-secure OT protocols that are secure against active adaptive adversaries. Our construction can be instantiated with the low noise LPN, McEliece,  QC-MDPC, LWE, and CDH assumptions. Our instantiations based on the low noise LPN,  McEliece, and QC-MDPC assumptions (which are more efficient than the previous works) are the first OT protocols, which are UC-secure against active adaptive adversaries based on these assumptions. 
Our CDH-based instantiation has basically the same efficiency as the Chou-Orlandi Simplest OT protocol and only two extra group elements of communication, but achieves security against stronger adversaries under a weaker assumption.
In the full version, we will show that it is also possible to get instantiations based on QR, DCR and NTRU assumptions.

\end{document}